\newtheorem{theorem}{Theorem}[section]
\newtheorem{lemma}[theorem]{Lemma}
\newtheorem{proposition}[theorem]{Proposition}
\theoremstyle{definition}
\newtheorem{remark}[theorem]{Remark}
\begin{document}
\title[]{ Estimating the Ratio of Two Functions in a Nonparametric Regression Model } 

\author[J. Markovic]{Jelena Markovic}
\thanks{Research supported by MIT's Undergraduate Research Opportunities Program (UROP)}

\email{markovic@mit.edu}
\author[L. Wang]{Lie Wang}
\email{liewang@math.mit.edu}
\address{Massachusetts Institute of Technology, Cambridge, MA, USA. }

\begin{abstract}

Due to measurement noise, a common problem in in various fields is how to estimate the ratio of two functions. We consider this problem of estimating the ratio of two functions in a nonparametric regression model. Assuming the noise is normally distributed, this is equivalent to estimating the ratio of the means of two normally distributed random variables. We identified a consistent estimator that gives the mean squared loss of order $O(1/n)$ ($n$ is the sample size) when conditioned on a highly probable event. We also present our result applied to both the real data from EAPS and on simulated data, confirming our theoretical results.

\end{abstract}

\maketitle

%%%%%%%%%%%%%%%%%%%%%%%%%%%%%%%%%%%%%%%%%%%%%%%%%%%%%%%%%%%%%%%%%%%%%%%%%%%%%%%%%%%%%%%%
%%%%%%%%%%%%%%%%%%%%%%%%%%%%%%%%%%%%%%%%%%%%%%%%%%%%%%%%%%%%%%%%%%%%%%%%%%%%%%%%%%%%%%%%
%%%%%%%%%%%%%%%%%%%%%%%%%%%%%%%%%%%%%%%%%%%%%%%%%%%%%%%%%%%%%%%%%%%%%%%%%%%%%%%%%%%%%%%%
%%%%%%%%%%%%%%%%%%%%%%%%%%%%%%%%%%%%%%%%%%%%%%%%%%%%%%%%%%%%%%%%%%%%%%%%%%%%%%%%%%%%%%%%
%%%%%%%%%%%%%%%%%%%%%%%%%%%%%%%%%%%%%%%%%%%%%%%%%%%%%%%%%%%%%%%%%%%%%%%%%%%%%%%%%%%%%%%%

\section{Introduction}
\label{sec:intro}

%%%%%%%%%%%%%%%%%%%%%%%%%%%%%%%%%%%%%%%%%%%%%%%%%%%%%%%%%%%%%%%%%%%%%%%%%%%%%%%%%%%%%%%%
%%%%%%%%%%%%%%%%%%%%%%%%%%%%%%%%%%%%%%%%%%%%%%%%%%%%%%%%%%%%%%%%%%%%%%%%%%%%%%%%%%%%%%%%
%%%%%%%%%%%%%%%%%%%%%%%%%%%%%%%%%%%%%%%%%%%%%%%%%%%%%%%%%%%%%%%%%%%%%%%%%%%%%%%%%%%%%%%%
%%%%%%%%%%%%%%%%%%%%%%%%%%%%%%%%%%%%%%%%%%%%%%%%%%%%%%%%%%%%%%%%%%%%%%%%%%%%%%%%%%%%%%%%
%%%%%%%%%%%%%%%%%%%%%%%%%%%%%%%%%%%%%%%%%%%%%%%%%%%%%%%%%%%%%%%%%%%%%%%%%%%%%%%%%%%%%%%%

Due to measurement noise, a common problem in physicochemical experiments is estimating the ratio of two signals (\cite{dodson}, \cite{ludwig}). The double-interpolation method of Dodson (\cite{dodson}) does not assume any measurement noise, making the computed error larger. This estimation problem also appeared in the Department of Earth, Atmospheric and Planetary Sciences (EAPS) at MIT, where scientists needed to estimate the ratio of a rock's gradient. In this work, we use a nonparametric regression model to estimate the ratio of two functions. In this setting, we identified a consistent estimator that has a small mean squared loss. We also present our result applied to both the real data from EAPS and on simulated data, confirming our theoretical results.

A nonparametric regression model is often used for various estimations, as in \cite{muller}, \cite{hall} and \cite{wang}. 
Similarly, our model is a nonparametric regression model in which we observe 
\begin{equation} \label{eq:set1}
x_{i}=rf(t_{i})+z_{i}\;\;\; (i=1,2,\ldots,n)
\end{equation}
and
\begin{equation}
\label{eq:set2}
y_{i}=f(t_{i})+w_{i}\;\;\;(i=1,2,\ldots,n), 
\end{equation}
$t_{i}\in\mathbb{R}$, where $z_{i}\sim\mathcal{N}(0,\sigma_{1}^{2})$ and $w_{i}\sim\mathcal{N}(0,\sigma_{2}^{2})$ are independent random variables and $f(\cdot)$ is an unknown function. 
The main parameters of the model and their essential role in the problem are as follows:
\begin{itemize}
\item $n$, which is the number of samples and should be assumed to be a moderately large number,
\item $(x_{1},x_{2},\ldots,x_{n})$ and $(y_{1},y_{2},\ldots,y_{n})$ are the observations,
\item $\sigma_{1}^2$ and $\sigma_{2}^2$ are the variances of the noise in measuring $rf$ and $f$, respectively. We did not know these variances a priori, but as we shall see in Section \ref{sec:variance}, these variances can be estimated with ease, hence in analyzing our proposed estimator, we work under the assumption that they are given to us.
\end{itemize}
We were not interested in the unknown function $f$, rather in the ratio $r$, which we wanted to estimate based on the observations $(x_{1},\ldots,x_{n})$ and $(y_{1},\ldots,y_{n})$. The estimation accuracy of $r$ is measured by the mean squared loss, 
\begin{equation*} \label{eq:loss}
R(\hat{r},r)=\mathbb{E}\left[(\hat{r}-r)^2\right],
\end{equation*}
where $\hat{r}$ denotes the estimated value of the true ratio $r$.
Let us denote with $X=(x_{1},\ldots,x_{n})$ and $Y=(y_{1},\ldots, y_{n})$ the observed data, and with $\mu_{1}=\left(rf(t_{1}),\ldots,rf(t_{n})\right)$ and $\mu_{2}=\left(f(t_{1}),\ldots,f(t_{n})\right)$ the unknown means. The goal is to estimate $r$ based on $X\sim\mathcal{N}(\mu_{1},I_{n}\sigma_{1}^2)$ and $Y\sim\mathcal{N}(\mu_{2},I_{n}\sigma_{2}^2)$, where $\mu_{1}=r\mu_{2}$ and the vectors $\mu_{1}, \mu_{2}\in\mathbb{R}^{n}$ are unknown.

In order to estimate the ratio $r$ of $\mu_{1}$ and $\mu_{2}$, note that $r$ is the least square solution, 
\begin{equation*}
r=\underset{c\in\mathbb{R}}{\text{arg}\;\text{min}}\|\mu_{1}-c\mu_{2}\|^2.
\end{equation*}
A naive approach is to replace $\mu_{1}$ with $X$ and $\mu_{2}$ with $Y$ in the above equation and estimate $r$ by 
\begin{equation*}
\tilde{r}=\underset{c\in\mathbb{R}}{\text{arg}\;\text{min}}\|X-cY\|^2=\frac{\langle X,Y\rangle}{\|Y\|^2}.
\end{equation*}
Note that $\mathbb{E}\left[\langle X,Y\rangle\right]=r\|\mu_{2}\|^2$ and $\mathbb{E}\left[\|Y\|^2\right]=\|\mu_{2}\|^2+n\sigma_{2}^2$. Thus, $\tilde{r}$ is not a good estimate for $r$ when $n$ is large. A better approach is to estimate $\|\mu_{2}\|^2$ by $\|Y\|^2-n\sigma_{2}^2$, which leads to the following estimator:
\begin{equation*}
\hat{r}=\frac{\langle X,Y\rangle}{\|Y\|^2-n\sigma_{2}^2}.
\end{equation*}
We need to analyze the loss, $R(\hat{r},r)=\mathbb{E}[(\hat{r}-r)^2]$, of this estimator. Let us denote with $D(Y)=\|Y\|^2-n\sigma_{2}^2$ the denominator of the proposed estimator. Since $D(Y)$ can take values arbitrarily close to zero, the mean squared loss of the proposed estimator is infinite as shown in Proposition \ref{prop:infinite_loss}. However, when we impose the condition $D(Y)>\alpha\mathbb{E}[D(Y)]$ for some constant $\alpha>0$, the loss is small.

\begin{proposition}
\label{prop:infinite_loss}
For $\hat{r}=\frac{\langle X,Y\rangle}{\|Y\|^2-n\sigma_{2}^2}$ its mean squared loss, $\mathbb{E}\left[(\hat{r}-r)^2\right]$, is infinite.
\end{proposition}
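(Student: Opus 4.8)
The plan is to show that the contribution to $\mathbb{E}[(\hat r-r)^2]$ coming from the event where the denominator $D(Y)=\|Y\|^2-n\sigma_2^2$ is near zero is already infinite, so that no amount of cancellation from the rest of the sample space can rescue it. Write $\hat r - r = \frac{\langle X,Y\rangle - rD(Y)}{D(Y)} = \frac{N(X,Y)}{D(Y)}$, where $N(X,Y)=\langle X,Y\rangle - r\|Y\|^2 + rn\sigma_2^2$. First I would condition on $Y$: given $Y$, the numerator $N$ is an affine function of $X\sim\mathcal{N}(r\mu_2, \sigma_1^2 I_n)$, hence $N\mid Y$ is Gaussian with mean $m(Y)=r\mu_2^\top Y - r\|Y\|^2 + rn\sigma_2^2 = -r\,Y^\top(Y-\mu_2)+rn\sigma_2^2$ and variance $v(Y)=\sigma_1^2\|Y\|^2>0$. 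Therefore
\begin{equation*}
\mathbb{E}\big[(\hat r-r)^2\,\big|\,Y\big] \;=\; \frac{m(Y)^2 + \sigma_1^2\|Y\|^2}{D(Y)^2}\;\ge\;\frac{\sigma_1^2\|Y\|^2}{D(Y)^2}.
\end{equation*}

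Next I would integrate over $Y$ and show the right-hand side has infinite expectation because of the behavior near the zero set $\{D(Y)=0\}=\{\|Y\|^2=n\sigma_2^2\}$, an $(n-1)$-sphere of positive radius. Since $Y$ has a density on $\mathbb{R}^n$ that is bounded below by a positive constant on a neighborhood of this sphere (and $\|Y\|^2$ is bounded below there too), it suffices to show $\int \frac{1}{D(Y)^2}\,dY = \int \frac{dY}{(\|Y\|^2-n\sigma_2^2)^2}=\infty$ over such a neighborhood. Passing to the variable $\rho=\|Y\|$ via the cofactor $\rho^{n-1}\,d\rho$, the integrand behaves like $(\rho^2-n\sigma_2^2)^{-2}\asymp (\rho-\sqrt{n\sigma_2^2})^{-2}$ near $\rho=\sqrt{n\sigma_2^2}$, and $\int (\rho-\rho_0)^{-2}\,d\rho$ diverges. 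Hence $\mathbb{E}\big[\sigma_1^2\|Y\|^2/D(Y)^2\big]=\infty$, and by the tower property $\mathbb{E}[(\hat r-r)^2]=\infty$.

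One technical point to handle carefully: I lower-bounded the conditional second moment by dropping the $m(Y)^2$ term, which is legitimate since $m(Y)^2\ge 0$; this avoids having to worry about whether $m(Y)$ happens to vanish on the critical sphere. The only real obstacle is making the density lower bound rigorous — i.e., confirming that $p_Y(y)\ge c>0$ and $\|y\|^2\ge c'>0$ uniformly on an open set $U$ that intersects $\{\|y\|^2=n\sigma_2^2\}$ in a set of positive surface measure. This is immediate because $p_Y$ is a continuous, strictly positive Gaussian density and the sphere $\|y\|=\sqrt{n\sigma_2^2}$ (radius $>0$ since $n,\sigma_2^2>0$) is compact, so both quantities attain positive minima on a thin annular neighborhood; restricting the divergent radial integral above to that neighborhood then finishes the argument. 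Everything else is a routine change of variables and the elementary divergence of $\int_0^\delta s^{-2}\,ds$.
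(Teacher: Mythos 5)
Your argument is correct. The first half is identical to the paper's: you condition on $Y$, use that $\langle X,Y\rangle\mid Y=y$ is Gaussian with mean $\langle\mu_1,y\rangle$ and variance $\sigma_1^2\|y\|^2$, drop the nonnegative squared-bias term, and arrive at the lower bound $\mathbb{E}[(\hat r-r)^2]\ge\sigma_1^2\,\mathbb{E}\bigl[\|Y\|^2/D(Y)^2\bigr]$. Where you diverge is in proving that this last expectation is infinite. The paper passes to the law of $\|Y\|^2/\sigma_2^2$, a noncentral chi-squared $\chi'^2_n(\lambda)$, writes it as a Poisson mixture of central $\chi^2_{n+2k}$ laws, and for each fixed $k$ shows the one-dimensional integral $\int_0^n \frac{x^{d/2}}{(x-n)^2}\,\frac{e^{-x/2}}{2^{d/2}\Gamma(d/2)}\,dx$ diverges because of the non-integrable second-order pole at $x=n$. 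You instead stay in $\mathbb{R}^n$: the Gaussian density of $Y$ is continuous and strictly positive, hence bounded below on a compact annular neighborhood of the sphere $\{\|y\|^2=n\sigma_2^2\}$, and after passing to polar coordinates the radial integral $\int(\rho-\rho_0)^{-2}\,d\rho$ diverges. Both proofs hinge on exactly the same phenomenon --- the denominator $D(Y)$ vanishes on a set where the law of $\|Y\|^2$ has a positive, continuous density, so $1/D^2$ is not integrable --- but your route is the more elementary one, since it needs only positivity and continuity of the Gaussian density rather than the Poisson-mixture representation of the noncentral chi-squared distribution; the paper's route, in exchange, makes the reduction to a completely explicit one-dimensional density computation. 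Your handling of the technical points (dropping $m(Y)^2$, compactness of the annulus, the degenerate-radius caveat requiring $\sigma_2^2>0$) is sound.
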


If we assume $\|\mu_{2}\|^2=\Omega(n)$, from Lemma \ref{lemma:D}, we conclude that $\mathbb{P}\left(|D(Y)|\leq\alpha\mathbb{E}[D(Y)]\right)=O(1/n)$ for any constant $\alpha\in(0,1)$. Thus, the denominator of the proposed estimator is bounded away from 0 with high probability. We show in Proposition \ref{prop:msl} that the mean squared loss conditioned on the event that the denominator is bounded away from 0 is in $O(1/n)$. In formal terms:

\begin{lemma} \label{lemma:D} 
For any constant $\alpha\in(0,1)$, the following holds: 
\begin{equation*}
\mathbb{P}\left(|D(Y)|\leq \alpha\mathbb{E}[D(Y)]\right)=O\left(\frac{\|\mu_{2}\|^2+n}{\|\mu_{2}\|^4}\right).
\end{equation*}
Furthermore, if we assume $\|\mu_{2}\|=\Omega(n)$, we get
\begin{equation}
\mathbb{P}\left(|D(Y)|\leq \alpha\mathbb{E}[D(Y)]\right)= O\left(\frac{1}{n}\right).
\end{equation}
\end{lemma}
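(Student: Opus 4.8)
The plan is to reduce the event $\{|D(Y)|\le\alpha\mathbb{E}[D(Y)]\}$ to a large deviation of $D(Y)$ from its mean and then apply Chebyshev's inequality. Since $Y\sim\mathcal{N}(\mu_2,I_n\sigma_2^2)$, we have $\mathbb{E}[\|Y\|^2]=\|\mu_2\|^2+n\sigma_2^2$, so $\mathbb{E}[D(Y)]=\|\mu_2\|^2$ (the bound is vacuous unless $\mu_2\neq 0$, which the second hypothesis $\|\mu_2\|^2=\Omega(n)$ ensures). The key deterministic observation is that on the event in question one has $D(Y)\le\alpha\|\mu_2\|^2$, hence $D(Y)-\mathbb{E}[D(Y)]\le-(1-\alpha)\|\mu_2\|^2$, and therefore $|D(Y)-\mathbb{E}[D(Y)]|\ge(1-\alpha)\|\mu_2\|^2$. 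Combining this inclusion of events with Chebyshev's inequality gives
\[
\mathbb{P}\bigl(|D(Y)|\le\alpha\mathbb{E}[D(Y)]\bigr)\le\mathbb{P}\bigl(|D(Y)-\mathbb{E}[D(Y)]|\ge(1-\alpha)\|\mu_2\|^2\bigr)\le\frac{\mathrm{Var}(D(Y))}{(1-\alpha)^2\|\mu_2\|^4},
\]
and since $\alpha$ is a fixed constant, $(1-\alpha)^2$ is absorbed into the implied constant.

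It remains to estimate $\mathrm{Var}(D(Y))=\mathrm{Var}(\|Y\|^2)$. Writing $\|Y\|^2=\sum_{i=1}^n y_i^2$ with the $y_i\sim\mathcal{N}(\mu_{2,i},\sigma_2^2)$ independent, the Gaussian moment identities $\mathbb{E}[y_i^2]=\mu_{2,i}^2+\sigma_2^2$ and $\mathbb{E}[y_i^4]=\mu_{2,i}^4+6\mu_{2,i}^2\sigma_2^2+3\sigma_2^4$ give $\mathrm{Var}(y_i^2)=4\sigma_2^2\mu_{2,i}^2+2\sigma_2^4$, so by independence $\mathrm{Var}(D(Y))=4\sigma_2^2\|\mu_2\|^2+2n\sigma_2^4=O(\|\mu_2\|^2+n)$ (with $\sigma_2$ treated as a fixed constant). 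Substituting into the display above yields the first assertion $\mathbb{P}(|D(Y)|\le\alpha\mathbb{E}[D(Y)])=O\bigl((\|\mu_2\|^2+n)/\|\mu_2\|^4\bigr)$. For the second assertion, the hypothesis $\|\mu_2\|^2=\Omega(n)$ gives $\|\mu_2\|^2+n=O(\|\mu_2\|^2)$, so the bound becomes $O(1/\|\mu_2\|^2)=O(1/n)$.

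I do not anticipate a genuine obstacle here; everything is elementary once the reduction to Chebyshev is made. The only step needing care is the fourth-moment computation of $\mathrm{Var}(\|Y\|^2)$, where the non-centrality of the coordinates $y_i$ contributes the term $4\sigma_2^2\|\mu_2\|^2$ — this is exactly what produces the $\|\mu_2\|^2$ in the numerator of the stated bound and therefore cannot be dropped. An alternative to Chebyshev would be a sharper $\chi^2$-concentration inequality, such as the Laurent--Massart bound, but since only polynomial decay is claimed, the second-moment estimate suffices.
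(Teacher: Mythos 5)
Your proposal is correct and follows essentially the same route as the paper: the inclusion $\{|D(Y)|\le\alpha\mathbb{E}[D(Y)]\}\subseteq\{|D(Y)-\mathbb{E}[D(Y)]|\ge(1-\alpha)\|\mu_2\|^2\}$, Chebyshev's inequality, and the computation $\mathrm{Var}(\|Y\|^2)=4\sigma_2^2\|\mu_2\|^2+2n\sigma_2^4$ (which the paper isolates as a separate lemma via the fourth moment of $\|Y\|^2$, while you compute it coordinate-wise — the same calculation). No substantive differences.
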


\begin{proposition}
\label{prop:msl}
Assuming $\|\mu_{2}\|^2=\Omega(n)$, 
\begin{equation*}
\mathbb{E}\left[(\hat{r}-r)^2 \;\Big|\; |D(Y)|>\alpha\mathbb{E}[D(Y)]\right]=O\left(\frac{1}{n}\right)
\end{equation*}
holds for any constant $\alpha\in(0,1)$.
\end{proposition}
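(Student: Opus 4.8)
The plan is to write the error as a single fraction $\hat{r}-r=N/D(Y)$, bound $1/D(Y)^2$ deterministically on the conditioning event, and then estimate the second moment of the numerator $N$. Substituting $X=r\mu_2+Z$ and $Y=\mu_2+W$, where $Z\sim\mathcal{N}(0,I_n\sigma_1^2)$ and $W\sim\mathcal{N}(0,I_n\sigma_2^2)$ are the independent noise vectors, and using $\mathbb{E}[D(Y)]=\|\mu_2\|^2$, the deterministic part $r\|\mu_2\|^2$ of $\langle X,Y\rangle$ cancels against $r\,\mathbb{E}[D(Y)]$ and one is left with
\begin{equation*}
\hat{r}-r=\frac{N}{D(Y)},\qquad N:=\langle Z,\mu_2\rangle - r\langle\mu_2,W\rangle + \langle Z,W\rangle - r\big(\|W\|^2-n\sigma_2^2\big).
\end{equation*}
On the event $\mathcal{A}:=\{\,|D(Y)|>\alpha\mathbb{E}[D(Y)]\,\}$ we have $D(Y)^2>\alpha^2\|\mu_2\|^4$, so $(\hat{r}-r)^2\mathbf{1}_{\mathcal{A}}\le N^2/(\alpha^2\|\mu_2\|^4)$ pointwise.

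Taking expectations and dividing by $\mathbb{P}(\mathcal{A})$ gives $\mathbb{E}[(\hat{r}-r)^2\mid\mathcal{A}]\le \mathbb{E}[N^2]/(\alpha^2\|\mu_2\|^4\,\mathbb{P}(\mathcal{A}))$, and Lemma~\ref{lemma:D} yields $\mathbb{P}(\mathcal{A})=1-O(1/n)\ge 1/2$ once $n$ is large, so it remains to show $\mathbb{E}[N^2]=O(\|\mu_2\|^2+n)$. Expanding the square, all six cross terms vanish: the three pairs containing $\langle Z,W\rangle$ vanish upon conditioning on $Z$ (or on $W$), since the other vector is centered and independent; $\mathbb{E}[\langle Z,\mu_2\rangle\langle\mu_2,W\rangle]=0$ and $\mathbb{E}[\langle Z,\mu_2\rangle(\|W\|^2-n\sigma_2^2)]=0$ by independence and $\mathbb{E}[Z]=0$; and $\mathbb{E}[\langle\mu_2,W\rangle(\|W\|^2-n\sigma_2^2)]=0$ because it is an odd-order central moment of a Gaussian vector. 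The four remaining diagonal terms are $\mathbb{E}[\langle Z,\mu_2\rangle^2]=\sigma_1^2\|\mu_2\|^2$, $r^2\mathbb{E}[\langle\mu_2,W\rangle^2]=r^2\sigma_2^2\|\mu_2\|^2$, $\mathbb{E}[\langle Z,W\rangle^2]=n\sigma_1^2\sigma_2^2$, and $r^2\,\mathrm{Var}(\|W\|^2)=2r^2n\sigma_2^4$, so $\mathbb{E}[N^2]=(\sigma_1^2+r^2\sigma_2^2)\|\mu_2\|^2+(\sigma_1^2+2r^2\sigma_2^2)\sigma_2^2 n=O(\|\mu_2\|^2+n)$.

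Finally, the hypothesis $\|\mu_2\|^2=\Omega(n)$ gives $n=O(\|\mu_2\|^2)$, hence $\mathbb{E}[N^2]=O(\|\mu_2\|^2)$ and $\mathbb{E}[(\hat{r}-r)^2\mid\mathcal{A}]=O(1/\|\mu_2\|^2)=O(1/n)$, as claimed; here $r,\sigma_1,\sigma_2$ and $\alpha$ are fixed, so all implied constants are uniform in the regime considered. The only step requiring real care is the second-moment expansion of $N$: one must verify that every cross term between the linear, bilinear, and centered-quadratic pieces is exactly zero, and keep track of the constant factors so the final bound is genuinely $O(1/n)$. Everything else is a one-line deterministic estimate on $\mathcal{A}$ together with a direct invocation of Lemma~\ref{lemma:D}.
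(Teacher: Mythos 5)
Your proposal is correct and follows essentially the same route as the paper: bound $1/D(Y)^2$ by $1/(\alpha^2\|\mu_2\|^4)$ on the conditioning event, reduce the conditional expectation to $\mathbb{E}[N^2]/(\alpha^2\|\mu_2\|^4\,\mathbb{P}(\mathcal{A}))$, control $\mathbb{P}(\mathcal{A})$ via Lemma~\ref{lemma:D}, and compute $\mathbb{E}[(\langle X,Y\rangle-rD)^2]=O(\|\mu_2\|^2+n)$ by the same noise decomposition (this last computation is exactly the paper's Lemma~\ref{lemma:numerator}). Your cross-term verification and final constants agree with the paper's.
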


\begin{remark}\label{remark:condition}
One should notice that the condition $\|\mu_{2}\|^2=\Omega(n)$ is not too restrictive since it holds given that $f$ is bounded away from zero, i.e., there is a constant $c>0$ such that $f(x)\geq c$ for all $x\in[0,1]$; this would imply $\|\mu_{2}\|^2=\sum_{i=1}^{n}f(t_{i})^2\geq nc^2$. 
\end{remark}

\vspace{7 mm}

Furthermore, in the following proposition, we prove that the proposed estimator, $\hat{r}=\frac{\langle X,Y\rangle}{\|Y\|^2-n\sigma_{2}^2}$, is consistent.

\begin{proposition}
\label{prop:consistency}
Assuming $\|\mu_{2}\|^2=\Omega(n)$, we have
\begin{equation*}
\mathbb{P}\left((\hat{r}-r)^2\leq \frac{c\log(n)}{n}\right)\rightarrow 1,\;\;\text{ as } n\rightarrow\infty,
\end{equation*}
for $\hat{r}=\frac{\langle X,Y\rangle}{\|Y\|^2-n\sigma_{2}^2}$ and any constant $c>0$.
\end{proposition}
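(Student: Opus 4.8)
The plan is to bootstrap from the two facts already established rather than re‑analyzing $\hat r$ from scratch. First I would fix an arbitrary constant $\alpha\in(0,1)$ and introduce the ``good event'' $E=\{|D(Y)|>\alpha\,\mathbb{E}[D(Y)]\}$ on which the denominator of $\hat r$ is bounded away from zero; recall that $\mathbb{E}[D(Y)]=\|\mu_{2}\|^2$. By Lemma \ref{lemma:D} together with the hypothesis $\|\mu_{2}\|^2=\Omega(n)$ we have $\mathbb{P}(E^{c})=O(1/n)\to 0$, and by Proposition \ref{prop:msl} the conditional mean squared loss satisfies $\mathbb{E}\!\left[(\hat r-r)^2\mid E\right]=O(1/n)$. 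These are the only inputs I expect to need.

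Next, fixing the target constant $c>0$, I would apply Markov's inequality inside the conditional probability measure $\mathbb{P}(\,\cdot\mid E)$ to the nonnegative random variable $(\hat r-r)^2$: this yields $\mathbb{P}\!\left((\hat r-r)^2> c\log(n)/n\mid E\right)\le \frac{n}{c\log n}\,\mathbb{E}\!\left[(\hat r-r)^2\mid E\right]=O(1/\log n)$. Then by the law of total probability, $\mathbb{P}\!\left((\hat r-r)^2> c\log(n)/n\right)\le \mathbb{P}\!\left((\hat r-r)^2> c\log(n)/n\mid E\right)+\mathbb{P}(E^{c})=O(1/\log n)+O(1/n)\to 0$, which is exactly the claim. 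So the whole argument collapses to two short lines once the earlier results are granted.

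I do not expect a serious obstacle here; the one point requiring care is that the conditioning on $E$ cannot be dropped. The \emph{unconditional} second moment $\mathbb{E}[(\hat r-r)^2]$ is infinite by Proposition \ref{prop:infinite_loss}, so the Markov step must be carried out under $\mathbb{P}(\,\cdot\mid E)$, and this is harmless only because $E^{c}$ contributes a negligible $O(1/n)$ at the end. If instead a self‑contained proof were wanted, the plan would be to write $\hat r-r=\bigl(\langle X,Y\rangle-rD(Y)\bigr)/D(Y)$, expand the numerator via $X=r\mu_{2}+Z$, $Y=\mu_{2}+W$ into $\langle Z,\mu_{2}\rangle-r\langle \mu_{2},W\rangle+\langle Z,W\rangle-r(\|W\|^2-n\sigma_{2}^2)$, control the two Gaussian inner products by Gaussian tail bounds, the cross term $\langle Z,W\rangle$ by first conditioning on $W$, and $\|W\|^2-n\sigma_{2}^2$ by a $\chi^2$ concentration inequality, using $\|\mu_{2}\|=\Omega(\sqrt n)$ to absorb the lower‑order $\sqrt{n\log n}$ and $\log n$ terms into a multiple of $\|\mu_{2}\|\sqrt{\log n}$; combined with $|D(Y)|\ge\alpha\|\mu_{2}\|^2$ on $E$, this shows $n(\hat r-r)^2$ is bounded in probability, and since $\log n\to\infty$ the stated bound follows for every $c>0$. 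The genuinely laborious part of that route — estimating the numerator — is precisely what the proof of Proposition \ref{prop:msl} already does, which is why the bootstrapping argument is preferable.
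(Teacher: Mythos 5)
Your proposal is correct and follows essentially the same strategy as the paper's proof: decompose on a high-probability event where the denominator $D(Y)$ is bounded below by a constant multiple of $\mathbb{E}[D(Y)]=\|\mu_{2}\|^2$, apply Markov's inequality on that event, and show the complementary event has vanishing probability via Lemma \ref{lemma:D}. The only difference is cosmetic: you invoke Proposition \ref{prop:msl} as a black box for the conditional second moment, whereas the paper works with the event $S=\{|D-\mathbb{E}[D]|\leq\mathbb{E}[D]/2\}$ and applies Markov directly to the numerator $(\langle X,Y\rangle-rD)^2$ using Lemma \ref{lemma:numerator}, which amounts to re-deriving the bound your shortcut reuses.
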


{\bf Techniques used in the proofs.} All the proofs of the results above are given in Section \ref{sec:proofs}. Here, we briefly elaborate on the techniques used. For proving {\bf Proposition \ref{prop:infinite_loss}}, we first used a conditional expectation to get a lower bound on the loss $\mathbb{E}_{X,Y}[(\hat{r}-r)^2]$ only in terms of $\mathbb{E}_{Y}$. Essentially, this step allowed us to get rid of the random variable $X$. Second, we used the properties of noncentral chi-squared distribution $\chi'^{2}_{n}$ to prove that this lower bound is actually infinite, making our loss infinite as well. {\bf Lemma \ref{lemma:D}} can be proved by conveniently using Chebyshev inequality. Afterwords, we used this lemma to prove Proposition \ref{prop:msl}. 
Finally, {\bf Proposition \ref{prop:consistency}} uses the law of total probability to analyze the probability of our estimator $\hat{r}$ being close to the true value $r$ in each of the following events: the denominator $D$ is far from its mean, $|D-\mathbb{E}[D]|>\mathbb{E}[D]/2$; and the denominator $D$ is close to its mean, $|D-\mathbb{E}[D]|<\mathbb{E}[D]/2$. Analyzing the latter is nontrivial, requiring several inequalities conveniently put together.

{\bf Outline of the paper.} In the following sections, we provide some observations that are useful in the applications. In Section \ref{sec:special}, we present a special case when $\|\mu_{2}\|^2=\Theta (n)$, for which the highly probable event we are conditioning on can be checked from data, ensuring the mean squared loss is small. Furthermore, it is also useful to have a method for estimating the noise variances. One should notice that for computing our estimator $\hat{r}$ it is enough to know $\sigma_{2}$, the variance of noise when observing $f$. In Section \ref{sec:variance}, we describe a variance estimator and prove it has a small mean squared loss. An important case to consider is how to apply our result when the measurements are taken at different time values, i.e., $x_{i}$ and $y_{i}$ are not observed at the same time. This is usually the case in the applications, where the measurements are done using a single channel (\cite{dodson},\cite{ludwig}). Hence, in Section \ref{sec:time}, we discuss the conditions sufficient to ignore the difference in the time values of observations, allowing us to apply our estimator as if the observations occured at the same time values. We also describe a possible way to interpolate the data such that it fits our model. Finally, in Section \ref{sec:num_results}, we present numerical results based on real and simulated data. Since in the simulation we know the exact value of the ratio, it gives us a confirmation of our theoretical results, showing the proposed estimator behaves well. In Section \ref{sec:conclusion}, we conclude this paper, giving several idea for possible future work. Section \ref{sec:proofs} contains the proofs in detail to all propositions and lemmas stated in this work.

\vspace{7 mm}

%%%%%%%%%%%%%%%%%%%%%%%%%%%%%%%%%%%%%%%%%%%%%%%%%%%%%%%%%%%%%%%%%%%%%%%%%%%%%%%%%%%
%%%%%%%%%%%%%%%%%%%%%%%%%%%%%%%%%%%%%%%%%%%%%%%%%%%%%%%%%%%%%%%%%%%%%%%%%%%%%%%%%%%
%%%%%%%%%%%%%%%%%%%%%%%%%%%%%%%%%%%%%%%%%%%%%%%%%%%%%%%%%%%%%%%%%%%%%%%%%%%%%%%%%%%
%%%%%%%%%%%%%%%%%%%%%%%%%%%%%%%%%%%%%%%%%%%%%%%%%%%%%%%%%%%%%%%%%%%%%%%%%%%%%%%%%%%
%%%%%%%%%%%%%%%%%%%%%%%%%%%%%%%%%%%%%%%%%%%%%%%%%%%%%%%%%%%%%%%%%%%%%%%%%%%%%%%%%%%

\section{The Special Case of $\|\mu_{2}\|^2=\Theta(n)$}
\label{sec:special}

%%%%%%%%%%%%%%%%%%%%%%%%%%%%%%%%%%%%%%%%%%%%%%%%%%%%%%%%%%%%%%%%%%%%%%%%%%%%%%%%%%%
%%%%%%%%%%%%%%%%%%%%%%%%%%%%%%%%%%%%%%%%%%%%%%%%%%%%%%%%%%%%%%%%%%%%%%%%%%%%%%%%%%%
%%%%%%%%%%%%%%%%%%%%%%%%%%%%%%%%%%%%%%%%%%%%%%%%%%%%%%%%%%%%%%%%%%%%%%%%%%%%%%%%%%%
%%%%%%%%%%%%%%%%%%%%%%%%%%%%%%%%%%%%%%%%%%%%%%%%%%%%%%%%%%%%%%%%%%%%%%%%%%%%%%%%%%%
%%%%%%%%%%%%%%%%%%%%%%%%%%%%%%%%%%%%%%%%%%%%%%%%%%%%%%%%%%%%%%%%%%%%%%%%%%%%%%%%%%%

Although the condition $|D(Y)|>\alpha\mathbb{E}[D]$ considered in the previous section holds with high probability, it cannot be checked by the data itself since $\mathbb{E}[D]=\|\mu_{2}\|^2$ is unknown. In the applications, it might be useful to be able to check whether this condition holds with certainty. To do that, we have to impose new restrictions on our model. Thus, we consider a special case when $\|\mu_{2}\|^2=\Theta(n)$.   

\begin{remark}
Similar to Remark \ref{remark:condition}, one should notice that the condition $\|\mu_{2}\|^2=\Theta(n)$ holds if $f$ is bounded away from zero, both from below and above, i.e., for all $x$, $c_{1}\leq f(x)\leq c_{2}$ holds for positive constants $c_{1}$ and $c_{2}$. 
\end{remark}

Now we present the results of considering this special case. Lemma \ref{lemma:Dnew} (below) is similar to Lemma \ref{lemma:D} (Section \ref{sec:intro}), with a slightly different condition. In Lemma \ref{lemma:Dnew}, we prove that $|D(Y)|>\beta n$ holds with a high probability. Notice that now we have the lower bound on the denominator in terms of $n$, thus making this condition checkable by the data. However, the lemma holds when $\|\mu_{2}\|^2\geq\beta n$. Assuring that this condition holds requires some prior knowledge of function $f$. A sufficient condition for $\|\mu_{2}\|^2\geq\beta n$ to hold wpuld be that $f$ is bounded below by $\beta$.  

In Proposition \ref{prop:msl_new}, similar to Proposition \ref{prop:msl},  we prove that the mean squared loss of our estimator is small, conditional on the event that $|D(Y)|>\beta n$. In formal terms: 

\begin{lemma}
\label{lemma:Dnew} Assuming $\|\mu_{2}\|^2\geq \beta n$, for positive constant $\beta$, the following holds:
\begin{equation*}
\mathbb{P}\left(|D(Y)|\leq \beta n\right)=O\left(\frac{1}{n}\right).
\end{equation*}
\end{lemma}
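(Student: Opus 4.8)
As in Lemma~\ref{lemma:D}, the strategy is to bound the event $\{|D(Y)|\le\beta n\}$ by a deviation of $D(Y)=\|Y\|^2-n\sigma_2^2$ from its mean and then apply Chebyshev's inequality; the ingredients are the first two moments of $D(Y)$ together with the observation that, in the regime $\|\mu_2\|^2=\Theta(n)$ of this section, the threshold $\beta n$ lies a constant fraction of $n$ below $\mathbb{E}[D(Y)]$.

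First I would compute the moments of $D(Y)$. Writing $Y=\mu_2+W$ with $W\sim\mathcal N(0,\sigma_2^2 I_n)$ and expanding $\|Y\|^2=\|\mu_2\|^2+2\langle\mu_2,W\rangle+\|W\|^2$, taking expectations gives $\mathbb{E}[D(Y)]=\|\mu_2\|^2$ (this is the identity $\mathbb{E}[\|Y\|^2]=\|\mu_2\|^2+n\sigma_2^2$ already noted in Section~\ref{sec:intro}). For the variance, $2\langle\mu_2,W\rangle\sim\mathcal N(0,4\sigma_2^2\|\mu_2\|^2)$, the term $\|W\|^2/\sigma_2^2\sim\chi^{2}_{n}$ contributes variance $2n\sigma_2^4$, and the covariance between the two vanishes because the relevant odd Gaussian moments are zero; equivalently $\|Y\|^2/\sigma_2^2\sim\chi'^{2}_{n}$ with noncentrality parameter $\|\mu_2\|^2/\sigma_2^2$. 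Either way, $\mathrm{Var}(D(Y))=2n\sigma_2^4+4\sigma_2^2\|\mu_2\|^2$, which is $O(n)$ in the present regime.

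Next, since $\mathbb{E}[D(Y)]=\|\mu_2\|^2\ge\beta n$, on the event $|D(Y)|\le\beta n$ one has $D(Y)\le\beta n\le\mathbb{E}[D(Y)]$, hence $|D(Y)-\mathbb{E}[D(Y)]|\ge\|\mu_2\|^2-\beta n$, and Chebyshev's inequality gives
\[
\mathbb{P}\bigl(|D(Y)|\le\beta n\bigr)\le\frac{\mathrm{Var}(D(Y))}{(\|\mu_2\|^2-\beta n)^2}=\frac{2n\sigma_2^4+4\sigma_2^2\|\mu_2\|^2}{(\|\mu_2\|^2-\beta n)^2}.
\]
With $\|\mu_2\|^2=\Theta(n)$ (and $\beta$ below the corresponding lower constant) the numerator is $O(n)$ and the denominator is $\Omega(n^2)$, giving the claimed $O(1/n)$.

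The step I expect to need the most care is this last one: Chebyshev is informative only because the separation $\|\mu_2\|^2-\beta n$ is $\Omega(n)$, not merely nonnegative — it is the standing assumption $\|\mu_2\|^2=\Theta(n)$ that supplies a genuine constant-factor gap between $\mathbb{E}[D(Y)]$ and $\beta n$, given that $D(Y)$ concentrates around $\|\mu_2\|^2$ only on the scale $\Theta(\sqrt n)$. Everything else is a routine moment computation.
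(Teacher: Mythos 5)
Your proof is correct and follows essentially the same route as the paper: both reduce the event $\{|D(Y)|\le\beta n\}$ to a deviation of $D(Y)$ from its mean $\|\mu_2\|^2$ and apply Chebyshev with $\mathrm{Var}(D(Y))=2n\sigma_2^4+4\sigma_2^2\|\mu_2\|^2=O(n)$; the only difference is that you inline the Chebyshev computation where the paper invokes Lemma~\ref{lemma:D} with an implicit $\alpha=\beta n/\|\mu_2\|^2$. You also correctly flag the same caveat the paper acknowledges in passing, namely that one needs $\|\mu_2\|^2\ge k_1 n$ with $k_1$ strictly larger than $\beta$ so that the gap $\|\mu_2\|^2-\beta n$ is genuinely $\Omega(n)$.
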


\begin{proposition}
\label{prop:msl_new}
Assume that $\|\mu_{2}\|^2=\Theta(n)$ and further assume that $\|\mu_{2}\|^2\geq \beta n$ for some positive constant $\beta$. Then the following holds
\begin{equation*}
\mathbb{E}\left[(\hat{r}-r)^2 \left|\; |D(Y)|>\beta n\right.\right]=O\left(\frac{1}{n}\right).
\end{equation*}
\end{proposition}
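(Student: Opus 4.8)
The plan is to mirror the proof of Proposition \ref{prop:msl}, exploiting that the conditioning event $\{|D(Y)|>\beta n\}$ is \emph{stronger} (more restrictive) than the event $\{|D(Y)| > \alpha \mathbb{E}[D(Y)]\}$ used there, up to a choice of constants. Write $N = \langle X, Y\rangle$ for the numerator and $D = D(Y) = \|Y\|^2 - n\sigma_2^2$ for the denominator, so that $\hat r = N/D$ and $\hat r - r = (N - rD)/D$. First I would record the moments: $\mathbb{E}[N] = r\|\mu_2\|^2$, $\mathbb{E}[D] = \|\mu_2\|^2$, and hence $\mathbb{E}[N - rD] = 0$; I would also compute $\mathrm{Var}(N)$ and $\mathrm{Var}(D)$ (the latter is available from the computation underlying Lemma \ref{lemma:D}), both of which are $O(\|\mu_2\|^2 + n) = O(n)$ under the running hypothesis $\|\mu_2\|^2 = \Theta(n)$. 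The key algebraic step is to bound
\begin{equation*}
\mathbb{E}\!\left[(\hat r - r)^2 \,\big|\, |D| > \beta n\right] = \mathbb{E}\!\left[\frac{(N - rD)^2}{D^2}\,\bigg|\, |D| > \beta n\right] \le \frac{1}{(\beta n)^2}\,\mathbb{E}\!\left[(N - rD)^2 \,\big|\, |D| > \beta n\right],
\end{equation*}
where the last inequality uses that on the conditioning event $D^2 > (\beta n)^2$ deterministically — this is precisely the advantage of the $\Theta(n)$ special case over the generic one, since here the lower bound on $|D|$ is a genuine constant times $n$ rather than a random quantity.

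It then remains to show $\mathbb{E}[(N - rD)^2 \mid |D| > \beta n] = O(n)$. I would first bound the unconditional second moment $\mathbb{E}[(N-rD)^2] = \mathrm{Var}(N - rD)$ by $O(n)$: expanding, $\mathrm{Var}(N - rD) = \mathrm{Var}(N) - 2r\,\mathrm{Cov}(N,D) + r^2\mathrm{Var}(D)$, and each term is $O(n)$ since all of $\|\mu_2\|^2$, $n$, and $r$ are $O(n)$-scale or constant (here one uses that $\mathrm{Cov}(N, D)$ involves only the $Y$-randomness in $N$, as $X$ is independent of $Y$, and is controlled by Cauchy–Schwarz against the two variances). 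To pass from the unconditional to the conditional second moment, I would use
\begin{equation*}
\mathbb{E}[(N - rD)^2 \mid |D| > \beta n] \le \frac{\mathbb{E}[(N-rD)^2]}{\mathbb{P}(|D| > \beta n)},
\end{equation*}
and invoke Lemma \ref{lemma:Dnew} to conclude $\mathbb{P}(|D| > \beta n) = 1 - O(1/n) \to 1$, so the denominator is bounded below by a positive constant for $n$ large. Combining the three displays gives $\mathbb{E}[(\hat r - r)^2 \mid |D| > \beta n] \le \frac{1}{(\beta n)^2} \cdot \frac{O(n)}{1 - O(1/n)} = O(1/n)$.

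The main obstacle is the moment bookkeeping for $\mathrm{Var}(N - rD)$, and in particular making sure the cross term $\mathrm{Cov}(N, D)$ is handled cleanly: since $N = \langle X, Y\rangle$ with $X \sim \mathcal{N}(r\mu_2, \sigma_1^2 I)$ independent of $Y$, one has $\mathbb{E}[N \mid Y] = r\langle \mu_2, Y\rangle$, and the law of total covariance gives $\mathrm{Cov}(N, D) = r\,\mathrm{Cov}(\langle \mu_2, Y\rangle, \|Y\|^2)$, which is a Gaussian fourth-moment computation yielding something of order $\|\mu_2\|^2\sigma_2^2 = O(n)$. Everything else — the deterministic lower bound on $D^2$, the conditioning inequality, and the appeal to Lemma \ref{lemma:Dnew} — is routine. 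I would remark at the end that this argument is in fact slightly simpler than that of Proposition \ref{prop:msl}, because the conditioning event here directly supplies a deterministic lower bound on $|D|$, so no separate truncation or case split on the size of $D$ is needed.
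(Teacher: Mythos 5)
Your proposal is correct and follows essentially the same route as the paper: bound $D^2 \ge \beta^2 n^2$ on the conditioning event, control the unconditional second moment $\mathbb{E}[(\langle X,Y\rangle - rD)^2] = O(\|\mu_2\|^2 + n)$ (the paper isolates this as Lemma \ref{lemma:numerator}; your variance decomposition is an equivalent computation), and divide by $\mathbb{P}(|D|>\beta n) \ge 1 - O(1/n)$ via Lemma \ref{lemma:Dnew}. No substantive differences.
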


The proofs of Lemma \ref{lemma:Dnew} and Proposition \ref{prop:msl_new} use the same techniques as the proofs of Lemma \ref{lemma:D} and Proposition \ref{prop:msl}, respectively. Making the exposition complete, we provide the full proofs of the above statements in Section \ref{sec:proofs}.

\vspace{7 mm}

%%%%%%%%%%%%%%%%%%%%%%%%%%%%%%%%%%%%%%%%%%%%%%%%%%%%%%%%%%%%%%%%%%%%%%%%%%%%%%%%%%%%%%%%%%%%%%%%%
%%%%%%%%%%%%%%%%%%%%%%%%%%%%%%%%%%%%%%%%%%%%%%%%%%%%%%%%%%%%%%%%%%%%%%%%%%%%%%%%%%%%%%%%%%%%%%%%%
%%%%%%%%%%%%%%%%%%%%%%%%%%%%%%%%%%%%%%%%%%%%%%%%%%%%%%%%%%%%%%%%%%%%%%%%%%%%%%%%%%%%%%%%%%%%%%%%%
%%%%%%%%%%%%%%%%%%%%%%%%%%%%%%%%%%%%%%%%%%%%%%%%%%%%%%%%%%%%%%%%%%%%%%%%%%%%%%%%%%%%%%%%%%%%%%%%%
%%%%%%%%%%%%%%%%%%%%%%%%%%%%%%%%%%%%%%%%%%%%%%%%%%%%%%%%%%%%%%%%%%%%%%%%%%%%%%%%%%%%%%%%%%%%%%%%%

\section{Variance Estimation}\label{sec:variance}

%%%%%%%%%%%%%%%%%%%%%%%%%%%%%%%%%%%%%%%%%%%%%%%%%%%%%%%%%%%%%%%%%%%%%%%%%%%%%%%%%%%%%%%%%%%%%%%%%
%%%%%%%%%%%%%%%%%%%%%%%%%%%%%%%%%%%%%%%%%%%%%%%%%%%%%%%%%%%%%%%%%%%%%%%%%%%%%%%%%%%%%%%%%%%%%%%%%
%%%%%%%%%%%%%%%%%%%%%%%%%%%%%%%%%%%%%%%%%%%%%%%%%%%%%%%%%%%%%%%%%%%%%%%%%%%%%%%%%%%%%%%%%%%%%%%%%
%%%%%%%%%%%%%%%%%%%%%%%%%%%%%%%%%%%%%%%%%%%%%%%%%%%%%%%%%%%%%%%%%%%%%%%%%%%%%%%%%%%%%%%%%%%%%%%%%
%%%%%%%%%%%%%%%%%%%%%%%%%%%%%%%%%%%%%%%%%%%%%%%%%%%%%%%%%%%%%%%%%%%%%%%%%%%%%%%%%%%%%%%%%%%%%%%%%

As one can easily notice, computing our estimator $\hat{r}$ requires knowing $\sigma_{2}$, the variance of each of the components of the vector $Y$ with values in $\mathbb{R}^{n}$. To simplfy the analysis of our estimator, we have assumed that this variance is known. However, in the real applications, it is usually the case that we do not have prior knowledge of $\sigma_{2}$; thus it is important to also have a method for estimating this variance. Essentially, our problem here is to estimate the variance of the coordinates of vector $Y$.

More specifically, given the observations $(y_{1},\ldots,y_{n})$ satisfying 
\begin{equation*}
y_{i}=f(t_{i})+w_{i}, 1\leq i\leq n,
\end{equation*}
where the function $f$ is unknown and $w_{i}$'s represent i.i.d. $\mathcal{N}(0,\sigma^2)$ random variables, we estimate $\sigma^2$, suppressing the index in denoting the variance to make the notation simpler. Denoting the differences of adjacent $y_{i}$'s as $d_{i}$'s, we get 
\begin{align*}
d_{i}=y_{2i-1}-y_{2i} &=f(t_{2i-1})+w_{2i-1}-f(t_{2i})-w_{2i} \\
&\sim\mathcal{N}\left(f(t_{2i-1})-f(t_{2i}),2\sigma^2\right),\;\;\; 1\leq i\leq \lfloor n/2\rfloor=:m.
\end{align*}
Let us denote with $\nu_{i}=f(t_{2i-1})-f(t_{2i})$ the mean of $d_{i}$, $1\leq i\leq m$. Since $w_{i}, 1\leq i\leq n$, are independent random variables, $d_{i}, 1\leq i\leq m$, are also independent. We propose the following estimator for $\sigma^2$:
\begin{equation*}
\hat{\sigma}^2=\frac{1}{2m}\sum_{i=1}^{m}d_{i}^2.
\end{equation*}
In Lemma \ref{lemma:var}, we prove that the square loss of this estimator is small, assuming function $f$ has the property that $\sum_{i=1}^{m}(f(t_{2i-1})-f(t_{2i}))^2=\sum_{i=1}^{m}\nu_{i}^2=\|\nu\|^2=O(\sqrt{n})$. In formal terms:

\begin{lemma}\label{lemma:var}
Assuming $\|\nu\|^2=O(\sqrt{n})$, we have
\begin{equation*}
\mathbb{E}\left[(\hat{\sigma}^2-\sigma^2)^2\right]=O\left(\frac{1}{n}\right).
\end{equation*}
\end{lemma}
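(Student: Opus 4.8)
The plan is to use the bias--variance decomposition
\[
\mathbb{E}\left[(\hat{\sigma}^2-\sigma^2)^2\right]=\left(\mathbb{E}[\hat{\sigma}^2]-\sigma^2\right)^2+\Var(\hat{\sigma}^2),
\]
and to bound each of the two terms separately by $O(1/n)$.

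First I would handle the bias. Since $d_{i}\sim\mathcal{N}(\nu_{i},2\sigma^2)$, we have $\mathbb{E}[d_{i}^2]=2\sigma^2+\nu_{i}^2$, and therefore
\[
\mathbb{E}[\hat{\sigma}^2]=\frac{1}{2m}\sum_{i=1}^{m}\left(2\sigma^2+\nu_{i}^2\right)=\sigma^2+\frac{\|\nu\|^2}{2m}.
\]
Thus the bias equals $\|\nu\|^2/(2m)$ exactly. Combining the hypothesis $\|\nu\|^2=O(\sqrt{n})$ with $m=\lfloor n/2\rfloor=\Theta(n)$, the bias is $O(1/\sqrt{n})$, so its square is $O(1/n)$. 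It is precisely here that the assumption $\|\nu\|^2=O(\sqrt{n})$ (rather than the weaker $\|\nu\|^2=O(n)$) is needed.

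Next I would bound the variance. Because the $d_{i}$ are independent,
\[
\Var(\hat{\sigma}^2)=\frac{1}{4m^2}\sum_{i=1}^{m}\Var(d_{i}^2).
\]
To evaluate $\Var(d_{i}^2)$ I would write $d_{i}=\nu_{i}+\sqrt{2}\,\sigma\,g_{i}$ with $g_{i}\sim\mathcal{N}(0,1)$ (equivalently, note that $d_{i}^2/(2\sigma^2)$ follows a noncentral chi-squared distribution with one degree of freedom and noncentrality parameter $\nu_{i}^2/(2\sigma^2)$), expand $d_i^2$, and use $\mathbb{E}[g_{i}]=\mathbb{E}[g_{i}^3]=0$ and $\mathbb{E}[(g_{i}^2-1)^2]=2$ to obtain $\Var(d_{i}^2)=8\sigma^2\nu_{i}^2+8\sigma^4$. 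Summing over $i$ gives
\[
\Var(\hat{\sigma}^2)=\frac{2\sigma^2\|\nu\|^2}{m^2}+\frac{2\sigma^4}{m}=O\!\left(\frac{\sqrt{n}}{n^2}\right)+O\!\left(\frac{1}{n}\right)=O\!\left(\frac{1}{n}\right).
\]
Adding the bias-squared bound and the variance bound yields $\mathbb{E}[(\hat{\sigma}^2-\sigma^2)^2]=O(1/n)$, as claimed.

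The argument is essentially a routine second-moment computation. The only mildly delicate point is keeping track of the cross term $g_i(g_i^2-1)$ when computing $\Var(d_{i}^2)$ for the non-centered Gaussian $d_{i}$; the noncentral chi-squared viewpoint makes this transparent, since that distribution has mean $1+\nu_i^2/(2\sigma^2)$ and variance $2+4\nu_i^2/(2\sigma^2)$. I do not anticipate any genuine obstacle.
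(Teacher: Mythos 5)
Your proof is correct and is essentially the same second-moment computation as the paper's: the paper expands $\mathbb{E}\bigl[\bigl(\tfrac{1}{2m}\sum_i(d_i^2-2\sigma^2)\bigr)^2\bigr]$ directly with cross terms and arrives at exactly your expression $\tfrac{1}{4m^2}\left(\|\nu\|^4+8\sigma^2\|\nu\|^2+8m\sigma^4\right)$, whose first summand is your squared bias and whose remaining summands are your variance. The bias--variance organization is a slightly cleaner bookkeeping of the same facts, and your observation that the $\|\nu\|^2=O(\sqrt{n})$ hypothesis is what controls the (squared) bias term is exactly where the paper uses it too.
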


The proof of Lemma \ref{lemma:var} is given in detail in Section \ref{sec:proofs}. 

\begin{remark} 
All functions $f$ with finite quadratic variation will satisfy the assumption of Lemma \ref{lemma:var}, $\|\nu\|^2=O(\sqrt{n})$, even more $\|\nu\|^2=O(1)$. Also, if $\left|f(t_{2i-1})-f(t_{2i})\right|^2 \leq 1/\sqrt{n}$, for all $1\leq i\leq m$, the condition is satisfied.
\end{remark}

\vspace{7 mm}

%%%%%%%%%%%%%%%%%%%%%%%%%%%%%%%%%%%%%%%%%%%%%%%%%%%%%%%%%%%%%%%%%%%%%%%%%%%%%%%%%
%%%%%%%%%%%%%%%%%%%%%%%%%%%%%%%%%%%%%%%%%%%%%%%%%%%%%%%%%%%%%%%%%%%%%%%%%%%%%%%%%
%%%%%%%%%%%%%%%%%%%%%%%%%%%%%%%%%%%%%%%%%%%%%%%%%%%%%%%%%%%%%%%%%%%%%%%%%%%%%%%%%
%%%%%%%%%%%%%%%%%%%%%%%%%%%%%%%%%%%%%%%%%%%%%%%%%%%%%%%%%%%%%%%%%%%%%%%%%%%%%%%%%
%%%%%%%%%%%%%%%%%%%%%%%%%%%%%%%%%%%%%%%%%%%%%%%%%%%%%%%%%%%%%%%%%%%%%%%%%%%%%%%%%

\section{Discussion on Different Time Values}\label{sec:time}

%%%%%%%%%%%%%%%%%%%%%%%%%%%%%%%%%%%%%%%%%%%%%%%%%%%%%%%%%%%%%%%%%%%%%%%%%%%%%%%%%
%%%%%%%%%%%%%%%%%%%%%%%%%%%%%%%%%%%%%%%%%%%%%%%%%%%%%%%%%%%%%%%%%%%%%%%%%%%%%%%%%
%%%%%%%%%%%%%%%%%%%%%%%%%%%%%%%%%%%%%%%%%%%%%%%%%%%%%%%%%%%%%%%%%%%%%%%%%%%%%%%%%
%%%%%%%%%%%%%%%%%%%%%%%%%%%%%%%%%%%%%%%%%%%%%%%%%%%%%%%%%%%%%%%%%%%%%%%%%%%%%%%%%
%%%%%%%%%%%%%%%%%%%%%%%%%%%%%%%%%%%%%%%%%%%%%%%%%%%%%%%%%%%%%%%%%%%%%%%%%%%%%%%%%

In our model, we assumed that the time values at which we observe functions $f$ and $rf$ are the same. However, in the applications we might not be able to make these observations at the same time, as occured in the real data case, provided in detail in Section \ref{sec:num_results}. In other words, \cite{dodson} and \cite{ludwig} state that it is usually the case that we cannot do the measurements at the same time due to a single measuring channel. Hence, it is important to see under what specific conditions it would be possible to ignore difference in the observation times, while still getting good theoretical guarantees on the proposed estimator, which we apply as previously.

Specifically, in the case when the observations, denoted as $x_{i}$ and $y_{i}$, $1\leq i\leq n$, are made at different times, the model consists of the following:
\begin{equation*}
x_{i}=rf(p_{i})+z_{i}\;\;\;(i=1,2,\ldots,n)
\end{equation*}
and
\begin{equation*}
\label{eq:set2}
y_{i}=f(t_{i})+w_{i}\;\;\;(i=1,2,\ldots,n), 
\end{equation*}
where $t_{i}$ and $p_{i}$ are not necessarily equal. As in our previous model presented in Section \ref{sec:intro}, we have $z_{i}\sim\mathcal{N}(0,\sigma_{1}^2)$ and $w_{i}\sim\mathcal{N}(0,\sigma_{2}^2)$ representing the independent noise variables. 
We denote $X=(rf(t_{1})+z_{1},\ldots,rf(t_{n})+z_{n})$ and $X'=(x_{1},\ldots,x_n)$. Since we do not observe vector $X$, we cannot evaluate our estimator for $r$, $\hat{r}=\frac{\langle X,Y\rangle}{\|Y\|^2-n\sigma_{2}^2}$. However, in this case we use the following estimator $\hat{r}'=\frac{\langle X',Y\rangle}{\|Y\|^2-n\sigma_{2}^2}$. In Lemma \ref{lemma:time}, we prove that assuming the difference $\|X-X'\|^2=r^2\sum_{i=1}^{n}(f(t_{i})-f(p_{i}))^2$ is bounded by a constant, the expected squared loss for estimating $r$ with $\hat{r}'$ is small, conditional on the event that the denominator is large enough. In formal terms:

\begin{lemma}\label{lemma:time}
Assuming $\|\mu_{2}\|^2\in\Omega(n)$ and $\|X-X'\|=O(1)$, we get
\begin{equation*}
E\left[|r-\hat{r}'|^2|\; |D(Y)|>\alpha E[D(Y)]\right]= O\left(\frac{1}{n}\right).
\end{equation*}
\end{lemma}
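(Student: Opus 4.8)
The plan is to reduce the estimation of $r$ by $\hat r'$ to the already-analyzed estimation of $r$ by $\hat r$, treating $\hat r' - \hat r$ as an error term controlled by $\|X - X'\|$. First I would write
\[
\hat r' - r = (\hat r - r) + (\hat r' - \hat r) = (\hat r - r) + \frac{\langle X' - X, Y\rangle}{D(Y)},
\]
so that, using $(a+b)^2 \le 2a^2 + 2b^2$,
\[
\mathbb{E}\bigl[(r - \hat r')^2 \,\big|\, |D(Y)| > \alpha\mathbb{E}[D(Y)]\bigr] \le 2\,\mathbb{E}\bigl[(\hat r - r)^2 \,\big|\, |D(Y)| > \alpha\mathbb{E}[D(Y)]\bigr] + 2\,\mathbb{E}\!\left[\frac{\langle X' - X, Y\rangle^2}{D(Y)^2} \,\bigg|\, |D(Y)| > \alpha\mathbb{E}[D(Y)]\right].
\]
The first term is $O(1/n)$ directly by Proposition \ref{prop:msl}, using the hypothesis $\|\mu_2\|^2 = \Omega(n)$. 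So the entire task is to bound the second term by $O(1/n)$.

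For the second term, note that on the conditioning event $D(Y)^2 > \alpha^2 \mathbb{E}[D(Y)]^2 = \alpha^2\|\mu_2\|^4 = \Omega(n^2)$, since $\mathbb{E}[D(Y)] = \|\mu_2\|^2 = \Omega(n)$. Hence it suffices to show $\mathbb{E}\bigl[\langle X' - X, Y\rangle^2 \,\big|\, \text{event}\bigr] = O(n)$; dividing by $\Omega(n^2)$ then yields $O(1/n)$. Since conditioning on a highly-probable event (Lemma \ref{lemma:D} gives probability $1 - O(1/n)$) inflates an unconditional expectation by at most a factor $1/(1 - O(1/n)) = O(1)$, it is enough to bound the \emph{unconditional} $\mathbb{E}[\langle X' - X, Y\rangle^2]$. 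Write $X' - X = (x_i - rf(t_i) - z_i)_i$; but more usefully, $X' - X$ has mean vector $\delta := \bigl(r(f(p_i) - f(t_i))\bigr)_i$ with $\|\delta\| = \|X - X'\|_{\text{mean}} = O(1)$ by hypothesis, and its noise part is $z' - z$ where $z'$ are the (independent) noise variables attached to the $x_i$. Expanding $\langle X' - X, Y\rangle = \langle \delta, \mu_2\rangle + \langle \delta, w\rangle + \langle z' - z, \mu_2\rangle + \langle z' - z, w\rangle$ and squaring, the cross terms of independent mean-zero pieces vanish in expectation, leaving (up to constants) $\langle \delta, \mu_2\rangle^2 + \mathbb{E}\langle \delta, w\rangle^2 + \mathbb{E}\langle z'-z, \mu_2\rangle^2 + \mathbb{E}\langle z'-z, w\rangle^2$. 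By Cauchy--Schwarz $\langle\delta,\mu_2\rangle^2 \le \|\delta\|^2\|\mu_2\|^2 = O(1)\cdot O(n) = O(n)$; the term $\mathbb{E}\langle\delta,w\rangle^2 = \sigma_2^2\|\delta\|^2 = O(1)$; $\mathbb{E}\langle z'-z,\mu_2\rangle^2 = (\sigma_1^2+\sigma_2^2)\|\mu_2\|^2 = O(n)$; and $\mathbb{E}\langle z'-z, w\rangle^2 = O(n)$ by independence and a direct variance computation. So $\mathbb{E}[\langle X'-X,Y\rangle^2] = O(n)$, as needed.

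A subtlety worth being careful about: one must justify passing from the conditional to the unconditional expectation. The clean way is the inequality $\mathbb{E}[U \mid A] \le \mathbb{E}[U]/\mathbb{P}(A)$ valid for nonnegative $U$, applied with $U = \langle X'-X,Y\rangle^2/D(Y)^2$ — but here $U$ itself involves $D(Y)$ in the denominator, so instead I would first use the deterministic bound $1/D(Y)^2 < 1/(\alpha^2\|\mu_2\|^4)$ that holds \emph{on} the event $A = \{|D(Y)| > \alpha\mathbb{E}[D(Y)]\}$, reducing to $\mathbb{E}[\langle X'-X,Y\rangle^2 \mid A] \le \mathbb{E}[\langle X'-X,Y\rangle^2]/\mathbb{P}(A)$, and then invoke Lemma \ref{lemma:D} to get $\mathbb{P}(A) = 1 - O(1/n) \ge 1/2$ for $n$ large.

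The main obstacle, I expect, is purely bookkeeping rather than conceptual: carefully tracking that $Y = \mu_2 + w$ is independent of both noise vectors $z$ and $z'$ appearing in $X - X'$, so that all the cross terms in the expansion of $\langle X'-X,Y\rangle^2$ genuinely have zero expectation, and then confirming each of the four surviving terms is $O(n)$ — the dominant contributions being $\langle\delta,\mu_2\rangle^2$ and $(\sigma_1^2+\sigma_2^2)\|\mu_2\|^2$, both $O(n)$ precisely because $\|\delta\| = O(1)$ and $\|\mu_2\|^2 = O(n)$ (the latter following from $\|\mu_2\|^2 = \Omega(n)$ only together with an implicit upper bound; if $\|\mu_2\|^2$ can grow faster than $n$ one should double-check, though then $D(Y)^2$ in the denominator grows correspondingly and the ratio is still controlled). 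Assembling these pieces gives the claimed $O(1/n)$ bound.
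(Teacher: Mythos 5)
Your proof is correct and follows essentially the same route as the paper: split $\hat r'-r$ into $(\hat r-r)$, handled by Proposition \ref{prop:msl}, plus $(\hat r'-\hat r)=\langle X'-X,Y\rangle/D(Y)$, whose numerator has second moment $O(\|\mu_2\|^2+n)$ and whose denominator exceeds $\alpha^2\|\mu_2\|^4=\Omega(n^2)$ on the conditioning event. Two minor remarks: in the paper's model $X$ and $X'$ share the same noise realization, so $X-X'$ is the deterministic vector $\bigl(r(f(p_i)-f(t_i))\bigr)_i$ and your extra $z'-z$ terms are unnecessary (though harmless, and they explain why you had to reinterpret the hypothesis $\|X-X'\|=O(1)$ as a bound on the mean difference); also, your explicit passage from conditional to unconditional expectation via $\mathbb{E}[U\mid A]\le\mathbb{E}[U]/\mathbb{P}(A)$ together with Lemma \ref{lemma:D}, and your use of $(a+b)^2\le 2a^2+2b^2$, are actually more careful than the paper, which silently equates $\mathbb{E}[\|Y\|^2\mid A]$ with $\mathbb{E}[\|Y\|^2]$ and invokes the literally false inequality $|r-\hat r'|^2\le|r-\hat r|^2+|\hat r-\hat r'|^2$ (both slips being immaterial to the big-$O$ conclusion).
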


In the proof of Lemma \ref{lemma:time}, we first prove that the mean squared loss in estimating $\hat{r}$ with $\hat{r}'$ is small, i.e., the expected value of $(\hat{r}-\hat{r}')^2$ conditional on the event that the denominator $D(Y)$ is $O(1/n)$. Second, we use the result of Proposition \ref{prop:msl}, which shows that the mean squared loss in estimating the true ratio $r$ with $\hat{r}$ is small. Combining these two through a triangle inequality proves Lemma \ref{lemma:time}. A detail proof is also given in Section \ref{sec:proofs}.

\vspace{7 mm}

%%%%%%%%%%%%%%%%%%%%%%%%%%%%%%%%%%%%%%%%%%%%%%%%%%%%%%%%%%%%%%%%%%%%%%%%%%%%%%%%%
%%%%%%%%%%%%%%%%%%%%%%%%%%%%%%%%%%%%%%%%%%%%%%%%%%%%%%%%%%%%%%%%%%%%%%%%%%%%%%%%%
%%%%%%%%%%%%%%%%%%%%%%%%%%%%%%%%%%%%%%%%%%%%%%%%%%%%%%%%%%%%%%%%%%%%%%%%%%%%%%%%%
%%%%%%%%%%%%%%%%%%%%%%%%%%%%%%%%%%%%%%%%%%%%%%%%%%%%%%%%%%%%%%%%%%%%%%%%%%%%%%%%%
%%%%%%%%%%%%%%%%%%%%%%%%%%%%%%%%%%%%%%%%%%%%%%%%%%%%%%%%%%%%%%%%%%%%%%%%%%%%%%%%%

\section{Numerical Results}\label{sec:num_results}

%%%%%%%%%%%%%%%%%%%%%%%%%%%%%%%%%%%%%%%%%%%%%%%%%%%%%%%%%%%%%%%%%%%%%%%%%%%%%%%%%
%%%%%%%%%%%%%%%%%%%%%%%%%%%%%%%%%%%%%%%%%%%%%%%%%%%%%%%%%%%%%%%%%%%%%%%%%%%%%%%%%
%%%%%%%%%%%%%%%%%%%%%%%%%%%%%%%%%%%%%%%%%%%%%%%%%%%%%%%%%%%%%%%%%%%%%%%%%%%%%%%%%
%%%%%%%%%%%%%%%%%%%%%%%%%%%%%%%%%%%%%%%%%%%%%%%%%%%%%%%%%%%%%%%%%%%%%%%%%%%%%%%%%
%%%%%%%%%%%%%%%%%%%%%%%%%%%%%%%%%%%%%%%%%%%%%%%%%%%%%%%%%%%%%%%%%%%%%%%%%%%%%%%%%

This estimation problem was identified in the work of researchers in the Department of Earth, Atmospheric and Planetary Sciences at MIT. Here, we present our result applied to the data they collected (Figure \ref{fig:data}). The observations can be written as
$x_{i}=rf(t_{i})+z_{i}$ and $y_{i}=f(p_{i})+w_{i}$, $1\leq i\leq n$, where $z_{i}\sim\mathcal{N}(0,\sigma_{1}^{2})$ and $w_{i}\sim\mathcal{N}(0,\sigma_{2}^2)$ are independent, normal random variables. Since the measurements could not be taken at the same time, the sets $\{t_{i}:1\leq i\leq n\}$ and $\{p_{i}:1\leq i\leq n\}$, representing time values, are disjointed. However, the difference between the time values is small; thus we will consider these two sets to be the same. Since $\sigma_{2}$ is unknown, we also estimate $\sigma_{2}$ as described in Section \ref{sec:variance}. Figure \ref{fig:ratio} shows how the value of our estimator for the ratio depends on the number of data points used.

In this particular data set, the differences between the time values are small, thus we considered them equal for the purpose of applying our estimator. With some other data set, we might need to interpolate between the observed points to get two sequences with the same time values. The interpolation can be done in various ways; however, we need to make sure the new (interpolated) points are independent in order to apply the estimator. To make sure the new sequence consists of independent points, in the interpolation process we should not use an observed point more than once.

\begin{figure}
\centering
\begin{subfigure}{.5\textwidth}
  \centering
  \includegraphics[width=\linewidth]{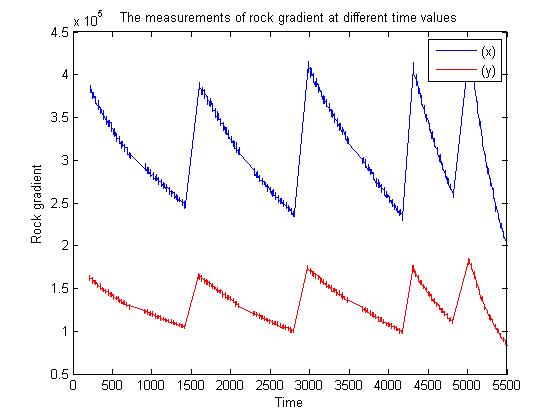}
  \caption{Data sets}
  \label{fig:data}
\end{subfigure}%
\begin{subfigure}{.5\textwidth}
  \centering
  \includegraphics[width=\linewidth]{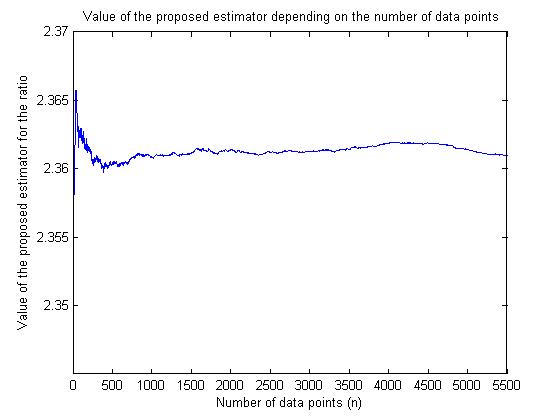}
  \caption{The dependence of estimated ratio on the number of data points used}
  \label{fig:ratio}
\end{subfigure}
\caption{Real data example}
\end{figure}

We also provide a result of a numerical simulation for which we take $x_{i}=10*(\sin(i)+2)+z_{i}$ and $y_{i}=\sin(i)+2+w_{i}$, $1\leq i\leq n$, where $z_{i}$ and $w_{i}$ are independent, identically distributed random variables from standard normal distribution. Figure \ref{fig:simulated} shows the estimated ratio $r$ using our estimator.

\begin{figure}[h]
\centering
\includegraphics[width=0.5\textwidth]{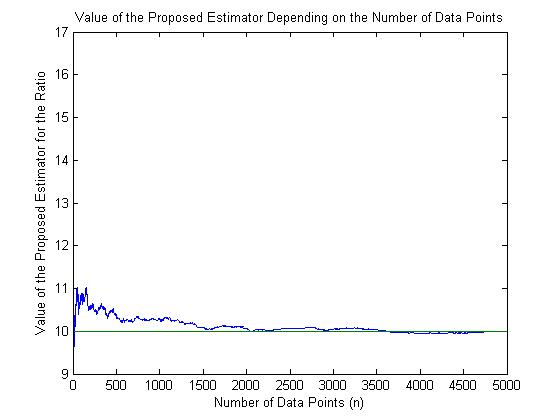}
\caption{The dependence of estimated ratio on the number of data points used (simulated data example)}
\label{fig:simulated}
\end{figure}

\vspace{7 mm}

%%%%%%%%%%%%%%%%%%%%%%%%%%%%%%%%%%%%%%%%%%%%%%%%%%%%%%%%%%%%%%%%%%%%%%%%%%%%%%%%%%%%%%%%%%%%%%%%%%%%%%%%%%%%%%%%%%%
%%%%%%%%%%%%%%%%%%%%%%%%%%%%%%%%%%%%%%%%%%%%%%%%%%%%%%%%%%%%%%%%%%%%%%%%%%%%%%%%%%%%%%%%%%%%%%%%%%%%%%%%%%%%%%%%%%%
%%%%%%%%%%%%%%%%%%%%%%%%%%%%%%%%%%%%%%%%%%%%%%%%%%%%%%%%%%%%%%%%%%%%%%%%%%%%%%%%%%%%%%%%%%%%%%%%%%%%%%%%%%%%%%%%%%%
%%%%%%%%%%%%%%%%%%%%%%%%%%%%%%%%%%%%%%%%%%%%%%%%%%%%%%%%%%%%%%%%%%%%%%%%%%%%%%%%%%%%%%%%%%%%%%%%%%%%%%%%%%%%%%%%%%%

\section{conclusion}\label{sec:conclusion}

%%%%%%%%%%%%%%%%%%%%%%%%%%%%%%%%%%%%%%%%%%%%%%%%%%%%%%%%%%%%%%%%%%%%%%%%%%%%%%%%%%%%%%%%%%%%%%%%%%%%%%%%%%%%%%%%%%%
%%%%%%%%%%%%%%%%%%%%%%%%%%%%%%%%%%%%%%%%%%%%%%%%%%%%%%%%%%%%%%%%%%%%%%%%%%%%%%%%%%%%%%%%%%%%%%%%%%%%%%%%%%%%%%%%%%%
%%%%%%%%%%%%%%%%%%%%%%%%%%%%%%%%%%%%%%%%%%%%%%%%%%%%%%%%%%%%%%%%%%%%%%%%%%%%%%%%%%%%%%%%%%%%%%%%%%%%%%%%%%%%%%%%%%%
%%%%%%%%%%%%%%%%%%%%%%%%%%%%%%%%%%%%%%%%%%%%%%%%%%%%%%%%%%%%%%%%%%%%%%%%%%%%%%%%%%%%%%%%%%%%%%%%%%%%%%%%%%%%%%%%%%%

The main significance of our work is that it not only proposes a solution for a common estimation problem that appears in various fields, e.g., chemistry, physics, geology; but our work also provides satisfying theoretical guarantees. Specifically, we proved our proposed estimator for the ratio of two functions is both consistent and has a small mean squared loss. We also solved various issues often encountered while working with real data, such as estimating noise variance. Furthermore, we proposed an approach for when data does not follow our model closely, when there are time discrepancies in observations. For this case, we provided a sufficient condition for which our proposed solution works well, i.e., with a small loss. Also, we mentioned the interpolation method that modifies the given data so that it fits our model, having observations at the same time. However, in the future work, the error coming from both interpolation and estimation procedure could be computed. Also, in this work, we have assumed the noise is normally distributed. Asuming different distribution on the noise, it would be both useful and interesting to analyze our estimator or propose a new one.

\vspace{7 mm}

%%%%%%%%%%%%%%%%%%%%%%%%%%%%%%%%%%%%%%%%%%%%%%%%%%%%%%%%%%%%%%%%%%%%%%%%%%%%%%%%%
%%%%%%%%%%%%%%%%%%%%%%%%%%%%%%%%%%%%%%%%%%%%%%%%%%%%%%%%%%%%%%%%%%%%%%%%%%%%%%%%%
%%%%%%%%%%%%%%%%%%%%%%%%%%%%%%%%%%%%%%%%%%%%%%%%%%%%%%%%%%%%%%%%%%%%%%%%%%%%%%%%%
%%%%%%%%%%%%%%%%%%%%%%%%%%%%%%%%%%%%%%%%%%%%%%%%%%%%%%%%%%%%%%%%%%%%%%%%%%%%%%%%%
%%%%%%%%%%%%%%%%%%%%%%%%%%%%%%%%%%%%%%%%%%%%%%%%%%%%%%%%%%%%%%%%%%%%%%%%%%%%%%%%%

\section{Proofs}\label{sec:proofs}

%%%%%%%%%%%%%%%%%%%%%%%%%%%%%%%%%%%%%%%%%%%%%%%%%%%%%%%%%%%%%%%%%%%%%%%%%%%%%%%%%
%%%%%%%%%%%%%%%%%%%%%%%%%%%%%%%%%%%%%%%%%%%%%%%%%%%%%%%%%%%%%%%%%%%%%%%%%%%%%%%%%
%%%%%%%%%%%%%%%%%%%%%%%%%%%%%%%%%%%%%%%%%%%%%%%%%%%%%%%%%%%%%%%%%%%%%%%%%%%%%%%%%
%%%%%%%%%%%%%%%%%%%%%%%%%%%%%%%%%%%%%%%%%%%%%%%%%%%%%%%%%%%%%%%%%%%%%%%%%%%%%%%%%
%%%%%%%%%%%%%%%%%%%%%%%%%%%%%%%%%%%%%%%%%%%%%%%%%%%%%%%%%%%%%%%%%%%%%%%%%%%%%%%%%

\begin{proof}[Proof of Proposition \ref{prop:infinite_loss}]
Le us denote $D(y)=\|y\|^2-n\sigma_{2}^2$.
Computing the conditional expectation $E_{X|Y=y}$ of the squared loss $(\hat{r}-r)^2$, we get
\begin{align}
\mathbb{E}_{X|Y=y}[(\hat{r}-r)^2] &=\mathbb{E}_{X|Y=y}\left[\left(\frac{\langle X,Y\rangle-r(\|Y\|^2-n\sigma_{2}^2)}{\|Y\|^2-n\sigma_{2}^2}\right)^2 \right]  \nonumber
\\ &= \frac{1}{D(y)^2}\left(\mathbb{E}_{X}\left[\langle X,y\rangle^2\right]-2rD(y)\mathbb{E}_{X}\left[\langle X,y\rangle\right]+r^2D(y)^2\right) \nonumber
\\ &= \frac{1}{D(y)^2}(\langle\mu_{1},y\rangle^2+\sigma_{1}^2\|y\|^2-2rD(y)\langle\mu_{1},y\rangle+r^2D(y)^2) \nonumber
\\ &=\frac{1}{D(y)^2}\left((\langle\mu_{1},y\rangle-rD(y))^2+\sigma_{1}^2\|y\|^2\right) \label{eq:condl},
\end{align}
where we used the facts that $\mathbb{E}[\langle X,y\rangle]=\langle\mu_{1},y\rangle$ and $\mathbb{E}[\langle X,y\rangle^2]=\langle\mu_{1},y\rangle^2+\sigma_{1}^2\|y\|^2$. Since $(\langle\mu_{1},y\rangle-rD(y))^2$ is nonnegative for all $y\in\mathbb{R}^{n}$, (\ref{eq:condl}) implies the following inequality:
\begin{equation*}
\mathbb{E}_{X|Y=y}[(\hat{r}-r)^2]\geq\frac{\sigma_{1}^2}{D(y)^2}\|y\|^2.
\label{eq:condl}
\end{equation*}
Using the law of iterated expectation, we get 
\begin{equation*}
\mathbb{E}\left[(\hat{r}-r)^2\right]\geq \sigma_{1}^2\mathbb{E}\left[\frac{\|Y\|^2}{D(Y)^2}\right].
\end{equation*}
Random variable $\frac{\|Y\|^2}{\sigma_{2}^2}$ is distributed as noncentral chi-squared distribution $\chi'^{2}_{n}(\lambda)$ with $n$ degrees of freedom and noncentrality parameter $\lambda=\frac{\|\mu_{2}\|^2}{\sigma_{2}^2}$. 
For $K\sim Poisson\left(\frac{\lambda}{2}\right)$ we have $\chi'^{2}_{n}(\lambda)\sim\chi^{2}_{n+2K}$. 

Let us denote with $Z(k)$ a random variable with  $\chi^{2}_{n+2k}$ distribution. Then $\frac{\|Y\|^2}{\sigma_{2}^2}$ has the same distribution as $Z(K)$.
Using this we can write
\begin{align}
\nonumber
\mathbb{E}\left[\frac{\|Y\|^2}{D(Y)^2}\right] &= \frac{1}{\sigma_{2}^2}\mathbb{E}\left[\frac{\frac{\|Y\|^2}{\sigma_{2}^{2}}}{\left(\frac{\|Y\|^2}{\sigma_{2}^2}-n\right)^2}\right]=\frac{1}{\sigma_{2}^2}\mathbb{E}\left[\frac{Z(K)}{(Z(K)-n)^2}\right] \\ 
& =\frac{1}{\sigma_{2}^2}\mathbb{E}_{K}\left[\mathbb{E}_{Z(K)|K=k}\left[\frac{Z(k)}{(Z(k)-n)^2}\;\Bigg|\; K=k\right]\right],
\label{eq:chi}
\end{align}
where in the last identity we used the law of iterated expectation. 

Let us further denote $d=n+2k$, where $k$ is fixed constant. Using the fact that $Z(k)\sim\chi_{d}^2$, we can write the inner expectation in \ref{eq:chi} in terms of the density function of chi-sqaured distribution:
\begin{align*}
\mathbb{E}\left[\frac{Z(k)}{(Z(k)-n)^2}\right] &=\int_{0}^{\infty} \frac{x}{(x-n)^2} \frac{x^{d/2-1}e^{-x/2}}{2^{d/2}\Gamma(d/2)}dx\\
&=\int_{0}^{n}\frac{x^{d/2}}{(x-n)^2} \frac{e^{-x/2}}{2^{d/2}\Gamma(d/2)}dx+\int_{n}^{\infty}\frac{x^{d/2}}{(x-n)^2} \frac{e^{-x/2}}{2^{d/2}\Gamma(d/2)}dx\,.
\end{align*}
In the first integral, all the terms except $\frac{x^{d/2}}{(x-n)^2}$ are bounded on $[0,n]$. Since $\int_{0}^{n}\frac{x^{d/2}}{(x-n)^2}dx=+\infty$, the first integral in the sum above is $+\infty$. Consequently, $E\left[\frac{Z(k)}{(Z(k)-n)^2}\right]$ is also $+\infty$, implying further that the expression in (\ref{eq:chi}) is infinite. Finally, this proves the mean squared loss is also infinite, finishing the proof.

\end{proof}

%%%%%%%%%%%%%%%%%%%%%%%%%%%%%%%%%%%%%%%%%%%%%%%%%%%%%%%%%%%%%%%%%%%%%%%%%%%%%%%%%%
%%%%%%%%%%%%%%%%%%%%%%%%%%%%%%%%%%%%%%%%%%%%%%%%%%%%%%%%%%%%%%%%%%%%%%%%%%%%%%%%%%
%%%%%%%%%%%%%%%%%%%%%%%%%%%%%%%%%%%%%%%%%%%%%%%%%%%%%%%%%%%%%%%%%%%%%%%%%%%%%%%%%%
%%%%%%%%%%%%%%%%%%%%%%%%%%%%%%%%%%%%%%%%%%%%%%%%%%%%%%%%%%%%%%%%%%%%%%%%%%%%%%%%%%
%%%%%%%%%%%%%%%%%%%%%%%%%%%%%%%%%%%%%%%%%%%%%%%%%%%%%%%%%%%%%%%%%%%%%%%%%%%%%%%%%%

\begin{lemma}
\label{lemma:4thnorm}
$Z\sim\mathcal{N}(\mu,\sigma^2 I_n)$, $\mu\in\mathbb{R}^n$, implies 
\begin{equation*}
\mathbb{E}[\|Z\|^2]=\|\mu\|^2+n\sigma^2\;\;\text{and}\;\;Var[\|Z\|^2]=2n\sigma^4+4\sigma^2\|\mu\|^2.
\end{equation*}
\end{lemma}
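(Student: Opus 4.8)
The plan is to reduce everything to one-dimensional moment computations for a standard normal by writing $Z_i = \mu_i + \sigma g_i$ with $g_1,\dots,g_n$ i.i.d.\ $\mathcal{N}(0,1)$, so that $\|Z\|^2 = \sum_{i=1}^n Z_i^2$ is a sum of \emph{independent} terms. For the mean, I would expand $Z_i^2 = \mu_i^2 + 2\mu_i\sigma g_i + \sigma^2 g_i^2$ and apply linearity together with $\mathbb{E}[g_i]=0$ and $\mathbb{E}[g_i^2]=1$, giving $\mathbb{E}[Z_i^2] = \mu_i^2 + \sigma^2$ and hence $\mathbb{E}[\|Z\|^2] = \|\mu\|^2 + n\sigma^2$ after summing.

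For the variance, the key observation is that independence of the coordinates gives $\mathrm{Var}[\|Z\|^2] = \sum_{i=1}^n \mathrm{Var}[Z_i^2]$, so it suffices to compute $\mathrm{Var}[Z_i^2]$. Using the same expansion, $\mathrm{Var}[Z_i^2] = \mathrm{Var}[2\mu_i\sigma g_i + \sigma^2 g_i^2]$, which splits into $4\mu_i^2\sigma^2\,\mathrm{Var}[g_i] + \sigma^4\,\mathrm{Var}[g_i^2] + 4\mu_i\sigma^3\,\mathrm{Cov}[g_i, g_i^2]$. The first term is $4\mu_i^2\sigma^2$; the second is $\sigma^4(\mathbb{E}[g_i^4] - 1) = 2\sigma^4$ since $\mathbb{E}[g_i^4]=3$; and the cross term vanishes because $\mathrm{Cov}[g_i,g_i^2] = \mathbb{E}[g_i^3] - \mathbb{E}[g_i]\mathbb{E}[g_i^2] = 0$, all odd moments of a centered Gaussian being zero. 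Summing over $i$ yields $\mathrm{Var}[\|Z\|^2] = 4\sigma^2\|\mu\|^2 + 2n\sigma^4$.

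An alternative (essentially equivalent) route is to invoke the fact, already used in the proof of Proposition~\ref{prop:infinite_loss}, that $\|Z\|^2/\sigma^2$ follows a noncentral chi-squared distribution $\chi'^2_n(\lambda)$ with $\lambda = \|\mu\|^2/\sigma^2$, whose mean is $n+\lambda$ and whose variance is $2(n+2\lambda)$; rescaling by $\sigma^2$ and $\sigma^4$ respectively gives the same two formulas immediately. I would likely include the elementary coordinate-wise derivation in the body and mention this shortcut as a remark.

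There is no real obstacle here: the statement is a standard second-moment computation. The only point requiring a moment of care is the variance step, where one must remember to use the fourth moment $\mathbb{E}[g^4]=3$ and to check that the covariance cross term genuinely drops out; everything else is bookkeeping with linearity and independence.
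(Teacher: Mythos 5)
Your proof is correct and follows essentially the same elementary route as the paper: both reduce to coordinate-wise Gaussian moment computations using $\mathbb{E}[g^4]=3$ and the vanishing of odd moments. The only cosmetic difference is that the paper expands $\mathbb{E}[\|Z\|^4]$ in full (including the cross terms $2\sum_{i<j}\mu_i^2\mu_j^2$) and then subtracts $(\mathbb{E}\|Z\|^2)^2$, whereas you use additivity of variance over the independent coordinates, which lets those cross terms cancel before they are ever written down.
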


\begin{proof}
Denote $\mu=(\mu_1,\ldots,\mu_n)$ and $Z=(Z_1,\ldots,Z_n)$, where $Z_i\overset{iid}{\sim}\mathcal{N}(\mu_i,\sigma)$, $1\leq i\leq n$.
Using the fact that the fourth moment of a $\mathcal{N}(\nu,\sigma^2)$ normal random variable is $\nu^4+6\sigma^2\nu^2+3\sigma^4$, we get
\begin{align*}
\mathbb{E}\left[\|Z\|^4\right] &= \mathbb{E}\left[(Z_{1}^2+\ldots+Z_{n}^2)^2\right]
     = \mathbb{E}\left[\sum_{i=1}^{n} Z_{i}^4 + 2\sum_{i<j}^{}Z_{i}^2 Z_{j}^2\right]
\\ & = \sum_{i=1}^{n}(\mu_{i}^4+6\mu_{i}^2\sigma^2+3\sigma^4)+2\sum_{i<j}^{}(\mu_{i}^2+\sigma^2)(\mu_{j}^2+\sigma^2 )
\\ & = \sum_{i=1}^{n}\mu_{i}^4+6\sigma^2\|\mu\|^2+3n\sigma^4+2\sum_{i<j}^{}\mu_{i}^2\mu_{j}^2+2(n-1)\sigma^2\|\mu\|^2+n(n-1)\sigma^4
\\ & = \|\mu\|^4+n(n+2)\sigma^4+2(n+2)\|\mu\|^2\sigma^2.
\end{align*}

The above computation implies $\text{Var}[\|Z\|^2]=\|\mu\|^4+n(n+2)\sigma^4+2(n+2)\|\mu\|^2\sigma^2-\|\mu\|^4-2n\|\mu\|^2\sigma^2-n^2\sigma^4=2n\sigma^4+4\sigma^2\|\mu\|^2$.

\end{proof}

%%%%%%%%%%%%%%%%%%%%%%%%%%%%%%%%%%%%%%%%%%%%%%%%%%%%%%%%%%%%%%%%%%%%%%%%%%%%%%%%%%%%%%%%%
%%%%%%%%%%%%%%%%%%%%%%%%%%%%%%%%%%%%%%%%%%%%%%%%%%%%%%%%%%%%%%%%%%%%%%%%%%%%%%%%%%%%%%%%%
%%%%%%%%%%%%%%%%%%%%%%%%%%%%%%%%%%%%%%%%%%%%%%%%%%%%%%%%%%%%%%%%%%%%%%%%%%%%%%%%%%%%%%%%%
%%%%%%%%%%%%%%%%%%%%%%%%%%%%%%%%%%%%%%%%%%%%%%%%%%%%%%%%%%%%%%%%%%%%%%%%%%%%%%%%%%%%%%%%%
%%%%%%%%%%%%%%%%%%%%%%%%%%%%%%%%%%%%%%%%%%%%%%%%%%%%%%%%%%%%%%%%%%%%%%%%%%%%%%%%%%%%%%%%%

\begin{lemma}
\label{lemma:numerator}
For $X\sim\mathcal{N}(\mu_{1},I_{n}\sigma_{1}^2)$ and $Y\sim\mathcal{N}(\mu_{2},I_{n}\sigma_{2}^2)$, the following holds
\begin{equation*}
\mathbb{E}\left[(\langle X,Y\rangle-rD)^2\right]=O\left(\|\mu_{2}\|^2+n\right),
\end{equation*}
where $D=\|Y\|^2-n\sigma_{2}^2$.
\end{lemma}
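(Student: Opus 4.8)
The plan is to reuse the conditioning trick from the proof of Proposition \ref{prop:infinite_loss}: condition on $Y=y$ to dispose of the randomness in $X$, and then reduce to a second-moment computation involving $Y$ alone. Recalling $\mu_1 = r\mu_2$, the conditional mean is $\mathbb{E}_{X|Y=y}[\langle X,y\rangle - rD(y)] = r(\langle\mu_2,y\rangle - \|y\|^2 + n\sigma_2^2)$ and the conditional variance is $\mathrm{Var}_{X|Y=y}[\langle X,y\rangle] = \sigma_1^2\|y\|^2$, so a bias--variance split gives
\[
\mathbb{E}_{X|Y=y}\!\left[(\langle X,y\rangle - rD(y))^2\right] = \sigma_1^2\|y\|^2 + r^2\big(\langle\mu_2,y\rangle - \|y\|^2 + n\sigma_2^2\big)^2 .
\]
Taking expectation over $Y$ via the tower property, it suffices to show that both $\mathbb{E}[\|Y\|^2]$ and $\mathbb{E}[W^2]$ are $O(\|\mu_2\|^2+n)$, where $W := \langle\mu_2,Y\rangle - \|Y\|^2 + n\sigma_2^2$.

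The first bound is immediate from Lemma \ref{lemma:4thnorm}, which gives $\mathbb{E}[\|Y\|^2] = \|\mu_2\|^2 + n\sigma_2^2$. For the second, I would substitute $Y = \mu_2 + \sigma_2 G$ with $G\sim\mathcal{N}(0,I_n)$; expanding $\langle\mu_2,Y\rangle$ and $\|Y\|^2$ makes the $\|\mu_2\|^2$ terms cancel, leaving $W = -\sigma_2\langle\mu_2,G\rangle - \sigma_2^2(\|G\|^2-n)$. Squaring and taking expectations, $\mathbb{E}[\langle\mu_2,G\rangle^2] = \|\mu_2\|^2$, $\mathbb{E}[(\|G\|^2-n)^2] = \mathrm{Var}(\|G\|^2) = 2n$, and the cross term $\mathbb{E}[\langle\mu_2,G\rangle(\|G\|^2-n)]$ vanishes because it is a sum of terms $\mathbb{E}[G_iG_j^2]$, each an odd moment (or a product involving $\mathbb{E}[G_i]=0$) of standard normal coordinates. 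Hence $\mathbb{E}[W^2] = \sigma_2^2\|\mu_2\|^2 + 2n\sigma_2^4 = O(\|\mu_2\|^2+n)$. Equivalently, one can read $\mathbb{E}[W^2]$ off the formula $\mathrm{Var}[\|Y\|^2] = 2n\sigma_2^4 + 4\sigma_2^2\|\mu_2\|^2$ of Lemma \ref{lemma:4thnorm} after expressing $W$ in terms of $\|Y\|^2 - \mathbb{E}\|Y\|^2$.

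Combining the pieces,
\[
\mathbb{E}\!\left[(\langle X,Y\rangle - rD)^2\right] = \sigma_1^2(\|\mu_2\|^2 + n\sigma_2^2) + r^2(\sigma_2^2\|\mu_2\|^2 + 2n\sigma_2^4) = O(\|\mu_2\|^2 + n),
\]
since $r$, $\sigma_1$, $\sigma_2$ are fixed model constants. There is no serious obstacle; the only points that need a little care are picking the decomposition $Y = \mu_2 + \sigma_2 G$ so that the cross term dies by symmetry, and keeping track of which quantities are constants that may be absorbed into the $O(\cdot)$.
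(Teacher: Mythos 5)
Your proof is correct and arrives, after the tower property, at exactly the same expansion and the same closed form $\sigma_1^2\|\mu_2\|^2+n\sigma_1^2\sigma_2^2+r^2\sigma_2^2\|\mu_2\|^2+2nr^2\sigma_2^4$ as the paper, which instead writes $X=\mu_1+\sigma_1\epsilon_1$, $Y=\mu_2+\sigma_2\epsilon_2$ and squares the resulting four-term sum directly; conditioning on $Y$ first versus expanding jointly is only a reorganization of the same computation. (The parenthetical claim that $\mathbb{E}[W^2]$ can be ``read off'' $\mathrm{Var}[\|Y\|^2]$ is a bit quick, since $W$ is not a scalar multiple of $\|Y\|^2-\mathbb{E}[\|Y\|^2]$, but your primary derivation is complete, so nothing rests on it.)
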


\begin{proof}
Let us write $X=\mu_{1}+\sigma_{1}\epsilon_{1}$ and $Y=\mu_{2}+\sigma_{2}\epsilon_{2}$, where $\epsilon_{1}\sim\mathcal{N}(0,I_{n})$ and $\epsilon_{2}\sim\mathcal{N}(0,I_{n})$ are independent normally distributed random variables. Using this representation of $X$ and $Y$, we compute
\begin{align*}
\langle X,Y\rangle-rD &= \langle \mu_{1}+\sigma_{1}\epsilon_{1}, \mu_{2}+\sigma_{2}\epsilon_{2}\rangle-r(\langle\mu_{2}+\sigma_{2}\epsilon_{2},\mu_{2}+\sigma_{2}\epsilon_{2}\rangle-n\sigma_{2}^2)
\\ &= \langle \mu_{1},\mu_{2}\rangle + \sigma_{2}\langle \mu_{1},\epsilon_{2} \rangle+\sigma_{1}\langle \epsilon_{1},\mu_{2}\rangle+\sigma_{1}\sigma_{1}\langle \epsilon_{1},\epsilon_{2}\rangle
\\ &\;\;\;\;-r(\|\mu_{2}\|^2+2\sigma_{2}\langle \mu_{2},\epsilon_{2}\rangle)+ \sigma_{2}^2(\|\epsilon_{2}\|^2-n)
\\ &= \sigma_{1}\langle\epsilon_{1},\mu_{2}\rangle+\sigma_{1}\sigma_{2}\langle\epsilon_{1},\epsilon_{2}\rangle-r\sigma_{2}^2(\|\epsilon_{2}\|^2-n)-r\sigma_{2}\langle\mu_{2},\epsilon_{2}\rangle.
\end{align*}

Using the following identities $\mathbb{E}\left[\langle\epsilon_{1},\mu_{2}\rangle^2\right]=\|\mu_{2}\|^2$, $\mathbb{E}\left[\langle\epsilon_{1},\epsilon_{2}\rangle^2\right]=n$, $\mathbb{E}\left[\langle\mu_{2},\epsilon_{2}\rangle^2\right]=\|\mu_{2}\|^2$, $\mathbb{E}\left[\|\epsilon_{2}\|^4\right]=3n+n(n-1)=n^2+2n$, we get 
\begin{align*}
\mathbb{E}\left[(\langle X,Y-rD\rangle)^2\right] &= \mathbb{E}\left[\left(\sigma_{1}\langle\epsilon_{1},\mu_{2}\rangle+\sigma_{1}\sigma_{2}\langle\epsilon_{1},\epsilon_{2}\rangle-r\sigma_{2}^2(\|\epsilon_{2}\|^2-n)-r\sigma_{2}\langle\mu_{2},\epsilon_{2}\rangle \right)^2\right]
\\ &= \sigma_{1}^2 \mathbb{E}\left[\langle\epsilon_{1},\mu_{2}\rangle^2\right]+\sigma_{1}^2\sigma_{2}^2 \mathbb{E}\left[\langle\epsilon_{1},\epsilon_{2}\rangle^2 \right]+r^2\sigma_{2}^4 \mathbb{E}\left[(\|\epsilon_{2}\|^2-n)^2\right]
\\ &\;\;\;\;+r^2 \sigma_{2}^2 \mathbb{E}\left[\langle\mu_{2},\epsilon_{2})^2\right]+2r^2\sigma_{2}^3\mathbb{E}\left[(\|\epsilon_{2}\|^2-n)\langle\mu_{2},\epsilon_{2}\rangle\right] 
\\ &= \sigma_{1}^2\|\mu_{2}\|^2+n\sigma_{1}^2\sigma_{2}^2+r^2\sigma_{2}^4(n^2+2n-2n^2+n^2)+r^2\sigma_{2}^2\|\mu_{2}\|^2+0 
\\ &= \sigma_{1}^2\|\mu_{2}\|^2+n\sigma_{1}^2\sigma_{2}^2+2nr^2\sigma_{2}^4+r^2\sigma_{2}^2\|\mu_{2}\|^2
\\ &\leq \text{max}\{\sigma_{1}^2, \sigma_{1}^2\sigma_{2}^2,2r^{2}\sigma_{2}^4, r^{2}\sigma_{2}^2\}(\|\mu_{2}\|^2+n)
\\ &= O(\|\mu_{2}\|^2+n).
\end{align*}

\end{proof}

%%%%%%%%%%%%%%%%%%%%%%%%%%%%%%%%%%%%%%%%%%%%%%%%%%%%%%%%%%%%%%%%%%%%%%%%%%%%%%%%%%
%%%%%%%%%%%%%%%%%%%%%%%%%%%%%%%%%%%%%%%%%%%%%%%%%%%%%%%%%%%%%%%%%%%%%%%%%%%%%%%%%%
%%%%%%%%%%%%%%%%%%%%%%%%%%%%%%%%%%%%%%%%%%%%%%%%%%%%%%%%%%%%%%%%%%%%%%%%%%%%%%%%%%
%%%%%%%%%%%%%%%%%%%%%%%%%%%%%%%%%%%%%%%%%%%%%%%%%%%%%%%%%%%%%%%%%%%%%%%%%%%%%%%%%%
%%%%%%%%%%%%%%%%%%%%%%%%%%%%%%%%%%%%%%%%%%%%%%%%%%%%%%%%%%%%%%%%%%%%%%%%%%%%%%%%%%

\begin{proof}[ Proof of Lemma \ref{lemma:D}]

Since $\alpha\in(0,1)$, we get
\begin{align}
\nonumber
\mathbb{P}\left(|D(Y)|\leq\alpha \mathbb{E}[D(Y)]\right)&\leq\mathbb{P}\left(\left|D(Y)-\mathbb{E}[D(Y)]\right|> (1-\alpha)\mathbb{E}[D(Y)]\right)\\
&\leq \frac{\text{Var}[D(Y)]}{(1-\alpha)^2\mathbb{E}[D(Y)]^2},
\label{eq:chebyshev}
\end{align}
where in the second inequality we used Chebyshev's inequality. Using Lemma \ref{lemma:4thnorm}, we get $\text{Var}[D(Y)]=\text{Var}[\|Y\|^2]=2n\sigma_{2}^4+4\sigma_{2}^2\|\mu_{2}\|^2=O(\|\mu_{2}\|^2+n)$. Hence, from \ref{eq:chebyshev} we conclude
\begin{equation*}
\mathbb{P}\left(|D(Y)|\leq\alpha\mathbb{E}[D(Y)]\right)=O\left(\frac{\|\mu_{2}\|^2+n}{\|\mu_{2}\|^4}\right).
\end{equation*}

\end{proof}

%%%%%%%%%%%%%%%%%%%%%%%%%%%%%%%%%%%%%%%%%%%%%%%%%%%%%%%%%%%%%%%%%%%%%%%%%%%%%%%%%
%%%%%%%%%%%%%%%%%%%%%%%%%%%%%%%%%%%%%%%%%%%%%%%%%%%%%%%%%%%%%%%%%%%%%%%%%%%%%%%%%
%%%%%%%%%%%%%%%%%%%%%%%%%%%%%%%%%%%%%%%%%%%%%%%%%%%%%%%%%%%%%%%%%%%%%%%%%%%%%%%%%
%%%%%%%%%%%%%%%%%%%%%%%%%%%%%%%%%%%%%%%%%%%%%%%%%%%%%%%%%%%%%%%%%%%%%%%%%%%%%%%%%
%%%%%%%%%%%%%%%%%%%%%%%%%%%%%%%%%%%%%%%%%%%%%%%%%%%%%%%%%%%%%%%%%%%%%%%%%%%%%%%%%

\begin{proof}[Proof of Proposition \ref{prop:msl}]

To keep the notation easier let us write $D$ instead of $D(Y)$. Denote $E^{*}=\mathbb{E}\left[(\hat{r}-r)^2 \left\vert\; |D|>\alpha \mathbb{E}[D]\right.\right]$ and $P^{*}=\mathbb{P}\left(|D|>\alpha \mathbb{E}[D]\right)$. 

Using 
\begin{equation*}
E^{*} \leq \frac{1}{\alpha^2\mathbb{E}[D]^2}\mathbb{E}\left[(\langle X,Y\rangle-rD)^2 \;\Big|\;|D|>\alpha \mathbb{E}[D]\right],
\end{equation*}
we get 
\begin{align}
E^{*}P^{*} &\leq \frac{1}{\alpha^2 \mathbb{E}[D]^2} \mathbb{E}\left[(\langle X,Y\rangle-rD)^2 \;\Big|\; |D|>\alpha \mathbb{E}[D]\right] \mathbb{P}\left(|D|>\alpha \mathbb{E}[D]\right) \nonumber
\\ &\leq \frac{1}{\alpha^2 \mathbb{E}[D]^2} \mathbb{E}\left[(\langle X,Y\rangle-rD)^2 \right] \leq c_{1}\frac{\|\mu_{2}\|^2+n}{\|\mu_{2}\|^4}, \label{eq:product} 
\end{align}
for some constant $c_{1}$, where in the last inequality we used Lemma \ref{lemma:numerator}.

From Lemma \ref{lemma:D}, we get $P^{*}\geq 1-c_{2}\frac{\|\mu_{2}\|^2+n}{\|\mu_{2}\|^4}$, for some constant $c_{2}$. Using this in (\ref{eq:product}), we get the following upper bound on $E^{*}$ 
\begin{equation*}
E^{*}\leq \frac{c_{1}(\|\mu_{2}\|^2+n)}{\|\mu_{2}\|^4-c_{2}(\|\mu_{2}\|^2+n)}=O\left(\frac{\|\mu_{2}\|^2+n}{\|\mu_{2}\|^4}\right)=O\left(\frac{1}{n}\right),
\end{equation*}
where we used the assumption $\|\mu_{2}\|^2=\Omega(n)$.

\end{proof}

%%%%%%%%%%%%%%%%%%%%%%%%%%%%%%%%%%%%%%%%%%%%%%%%%%%%%%%%%%%%%%%%%%%%%%%%%%%%%%%%%%%%
%%%%%%%%%%%%%%%%%%%%%%%%%%%%%%%%%%%%%%%%%%%%%%%%%%%%%%%%%%%%%%%%%%%%%%%%%%%%%%%%%%%%
%%%%%%%%%%%%%%%%%%%%%%%%%%%%%%%%%%%%%%%%%%%%%%%%%%%%%%%%%%%%%%%%%%%%%%%%%%%%%%%%%%%%
%%%%%%%%%%%%%%%%%%%%%%%%%%%%%%%%%%%%%%%%%%%%%%%%%%%%%%%%%%%%%%%%%%%%%%%%%%%%%%%%%%%%
%%%%%%%%%%%%%%%%%%%%%%%%%%%%%%%%%%%%%%%%%%%%%%%%%%%%%%%%%%%%%%%%%%%%%%%%%%%%%%%%%%%%

\begin{proof}[Proof of Proposition \ref{prop:consistency}]

Le us denote $D=\|Y\|^2-n\sigma_{2}^2$, the event $S=\{|D-\mathbb{E}[D]|\leq\mathbb{E}[D]/2\}$ and $S^{c}$ its complement.
Using the law of total probability, we can write 
\begin{align*}
\mathbb{P}\left((\hat{r}-r)^2\leq\frac{c\log(n)}{n}\right) &= \underset{P_{1}}{\underbrace{\mathbb{P}\left((\hat{r}-r)^2\leq\frac{c\log(n)}{n}\;\Big|\; S \right)}} \mathbb{P}(S) \\
&+ \underset{P_{2}}{\underbrace{\mathbb{P}\left((\hat{r}-r)^2\leq\frac{c\log(n)}{n}\;\Big|\; S^{c}\right)}}\mathbb{P}(S^{c}).
\end{align*}
It is easy to see that $P_{2}\mathbb{P}(S^{c})$ converges to zero as n tends to infinity. From the second inequality in (\ref{eq:chebyshev}) we have $\mathbb{P}(S^{c})=\mathbb{P}\left(|D-E[D]|>\frac{E[D]}{2}\right)=O\left(\frac{\|\mu_{2}\|^2+n}{\|\mu_{2}\|^4}\right)$, thus $\underset{n\rightarrow\infty}{\lim}\mathbb{P}(S^{c})=0$. Since $P_{2}$ is bounded, we get $\underset{n\rightarrow\infty}{\lim}P_{2}\mathbb{P}(S^{c})=0$.

Now, let us derive a lower bound for $P_{1}\mathbb{P}(S)$. Since the event we condition on is $S=\{D\in\left(\mathbb{E}[D]/2,3\mathbb{E}[D]/2\right)\}$, we get
\begin{align*}
P_{1} &=\mathbb{P}\left((\hat{r}-r)^2\leq\frac{c\log(n)}{n}\;\Big|\;S\right)\\
&\geq \mathbb{P}\left(\frac{4(\langle X,Y\rangle-rD)^2}{\mathbb{E}[D]^2}\leq\frac{c\log(n)}{n}\;\Big|\; S\right)\\
&= 1-\mathbb{P}\left(\left(\langle X,Y\rangle-rD\right)^2>\frac{c\log(n)}{4}\frac{\mathbb{E}[D]^2}{n}\;\Big|\; S\right).
\end{align*}
Using the Markov's inequality
\begin{equation*}
\mathbb{P}\left((\langle X,Y\rangle-rD)^2>\frac{c\log(n)}{4}\frac{\mathbb{E}[D]^2}{n}\;\Big|\; S\right)\leq 
\frac{\mathbb{E}\left[(\langle X,Y\rangle-rD)^2\;\Big|\; S \right]}{\frac{c\log(n)}{4}\frac{\mathbb{E}[D]^2}{n}},
\end{equation*}
we get
\begin{equation*}
P_{1} \geq 1-\frac{\mathbb{E}\left[(\langle X,Y\rangle-rD)^2\;\Big|\; S\right]}{\frac{c\log(n)}{4}\frac{\mathbb{E}[D]^2}{n}}.
\end{equation*}

Using the fact that 
\begin{equation*}
\mathbb{E}\left[(\langle X,Y\rangle-rD)^2\right]\geq \mathbb{E}\left[(\langle X,Y\rangle-rD)^2\;\Big|\;S\right]\mathbb{P}(S),
\end{equation*}
we can further bound $P_{1}\mathbb{P}(S)$ to get
\begin{equation*}
P_{1}\mathbb{P}(S) \geq \mathbb{P}(S)-\frac{\mathbb{E}\left[(\langle X,Y\rangle-rD)^2\right]}{\frac{c\log(n)}{4}\frac{\mathbb{E}[D]^2}{n}}.
\end{equation*}

We know that $1-\mathbb{P}(S)=\mathbb{P}(S^{c})=O\left(\frac{\|\mu_{2}\|^2+n}{\|\mu_{2}\|^4}\right)$. Lemma \ref{lemma:numerator} implies $\frac{\mathbb{E}\left[(\langle X,Y\rangle-rD)^2\right]}{\frac{c\log(n)}{4}\frac{\mathbb{E}[D]^2}{n}}=O\left(\frac{n(\|\mu_{2}\|^2+n)}{\log(n)\|\mu_{2}\|^4} \right)=O\left(\frac{n}{\log(n)\|\mu_{2}\|^2}\right)$. 
Hence, we conclude 
\begin{equation*}
\underset{n\rightarrow\infty}{\lim} P_{1}\mathbb{P}(S)=1.
\end{equation*}
This finishes the proof of consistency, i.e.,
$$\underset{n\rightarrow\infty}{\lim} \mathbb{P}\left((\hat{r}-r)^2\leq\frac{c\log(n)}{n}\right)=1.$$

\end{proof}

%%%%%%%%%%%%%%%%%%%%%%%%%%%%%%%%%%%%%%%%%%%%%%%%%%%%%%%%%%%%%%%%%%%%%%%%%%%%%%%%%%%%%%%%%%%%
%%%%%%%%%%%%%%%%%%%%%%%%%%%%%%%%%%%%%%%%%%%%%%%%%%%%%%%%%%%%%%%%%%%%%%%%%%%%%%%%%%%%%%%%%%%%
%%%%%%%%%%%%%%%%%%%%%%%%%%%%%%%%%%%%%%%%%%%%%%%%%%%%%%%%%%%%%%%%%%%%%%%%%%%%%%%%%%%%%%%%%%%%
%%%%%%%%%%%%%%%%%%%%%%%%%%%%%%%%%%%%%%%%%%%%%%%%%%%%%%%%%%%%%%%%%%%%%%%%%%%%%%%%%%%%%%%%%%%%
%%%%%%%%%%%%%%%%%%%%%%%%%%%%%%%%%%%%%%%%%%%%%%%%%%%%%%%%%%%%%%%%%%%%%%%%%%%%%%%%%%%%%%%%%%%%

\begin{proof}[Proof of Lemma \ref{lemma:Dnew}]

Since $\mathbb{E}[D(Y)]=\|\mu_{2}\|^2\geq \beta n$, we get
\begin{equation}
\mathbb{P}\left(|D(Y)|\leq\beta n\right)\leq \mathbb{P}\left(\left|D(Y)\right|<\beta\frac{\mathbb{E}[D]}{\beta}\right)=O\left(\frac{1}{n}\right),
\end{equation} 
where the last equality comes from Lemma \ref{lemma:D}. One should note that $\beta/k_{1}\in(0,1)$ is necessary to apply Lemma \ref{lemma:D} here.

\end{proof}

%%%%%%%%%%%%%%%%%%%%%%%%%%%%%%%%%%%%%%%%%%%%%%%%%%%%%%%%%%%%%%%%%%%%%%%%%%%%%%%%%%%%%%%%%%%%
%%%%%%%%%%%%%%%%%%%%%%%%%%%%%%%%%%%%%%%%%%%%%%%%%%%%%%%%%%%%%%%%%%%%%%%%%%%%%%%%%%%%%%%%%%%%
%%%%%%%%%%%%%%%%%%%%%%%%%%%%%%%%%%%%%%%%%%%%%%%%%%%%%%%%%%%%%%%%%%%%%%%%%%%%%%%%%%%%%%%%%%%%
%%%%%%%%%%%%%%%%%%%%%%%%%%%%%%%%%%%%%%%%%%%%%%%%%%%%%%%%%%%%%%%%%%%%%%%%%%%%%%%%%%%%%%%%%%%%
%%%%%%%%%%%%%%%%%%%%%%%%%%%%%%%%%%%%%%%%%%%%%%%%%%%%%%%%%%%%%%%%%%%%%%%%%%%%%%%%%%%%%%%%%%%%

\begin{proof}[ Proof of Proposition \ref{prop:msl_new}]

To keep the notation easier let us write $D$ instead of $D(Y)$. Denote $E^{*}=E\left[(\hat{r}-r)^2 \left\vert\; |D|>\beta n\right.\right]$ and $P^{*}=\mathbb{P}\left(|D|>\beta n\right)$. 

Using 
\begin{equation*}
E^{*}\leq\frac{1}{\beta^2n^2}\mathbb{E}\left[(\langle X,Y\rangle-rD)^2 \;\Big|\; |D|>\beta n\right],
\end{equation*}
we get 
\begin{align*}
E^{*}P^{*} &\leq \frac{1}{\beta^2 n^2} \mathbb{E}\left[(\langle X,Y\rangle-rD)^2 \;\Big|\; |D|>\beta n\right] \mathbb{P}\left(|D|>\beta n\right) \nonumber
\\ &\leq \frac{1}{\beta^2 n^2} \mathbb{E}\left[(\langle X,Y\rangle-rD)^2 \right] \leq c_{1}\frac{\|\mu_{2}\|^2+n}{n^2},
\end{align*}
for some constant $c_{1}$, where in the last inequality we used Lemma \ref{lemma:numerator}.

From Lemma \ref{lemma:Dnew} we get $P^{*}\geq 1-\frac{c_{2}}{n}$, for some constant $c_{2}$. Thus, the following holds
\begin{equation*}
E^{*}\leq \frac{c_{1}(\|\mu_{2}\|^2+n)}{n^2-c_{2}n}= O\left(\frac{\|\mu_{2}\|^2+n}{n^2}\right)=O\left(\frac{1}{n}\right),
\end{equation*}
where we used the assumption $\|\mu_{2}\|^2=\Theta(n)$.

\end{proof}

%%%%%%%%%%%%%%%%%%%%%%%%%%%%%%%%%%%%%%%%%%%%%%%%%%%%%%%%%%%%%%%%%%%%%%%%%
%%%%%%%%%%%%%%%%%%%%%%%%%%%%%%%%%%%%%%%%%%%%%%%%%%%%%%%%%%%%%%%%%%%%%%%%%
%%%%%%%%%%%%%%%%%%%%%%%%%%%%%%%%%%%%%%%%%%%%%%%%%%%%%%%%%%%%%%%%%%%%%%%%%
%%%%%%%%%%%%%%%%%%%%%%%%%%%%%%%%%%%%%%%%%%%%%%%%%%%%%%%%%%%%%%%%%%%%%%%%%
%%%%%%%%%%%%%%%%%%%%%%%%%%%%%%%%%%%%%%%%%%%%%%%%%%%%%%%%%%%%%%%%%%%%%%%%%

\begin{proof}[Proof of Lemma \ref{lemma:var}]

Let $C$ be a constant such that $\|\nu\|^2\leq C\sqrt{n}$ for all $n$. Then we get
\begin{align*}
\mathbb{E}\left[(\hat{\sigma}^2-\sigma^2)^2\right] &= E\left[\left(\frac{1}{2m}\sum_{i=1}^{m}d_{i}^2-\sigma^2\right)^2\right] \\
&= \mathbb{E}\left[ \left( \frac{1}{2m}\sum_{i=1}^{m}(d_{i}^2-2\sigma^2) \right)^2 \right] \\
&= \frac{1}{4m^2}\mathbb{E}\left[\sum_{i=1}^{m}(d_{i}^2-2\sigma^2)^2+2\sum_{i<j}(d_{i}^2-2\sigma^2)(d_{j}^2-2\sigma^2)\right] \\
&= \frac{1}{4m^2} \left( \sum_{i=1}^{m}\mathbb{E}\left[(d_{i}^2-2\sigma^2)^2\right]+2\sum_{i<j}\mathbb{E}\left[(d_{i}^2-2\sigma^2)(d_{j}^2-2\sigma^2)\right]\right)
\end{align*}
Using the fact that $d_{i}\sim\mathcal{N}(\nu_{i},2\sigma^2)$, we get $\mathbb{E}[d_{i}^2]=\nu_{i}^2+2\sigma^2$, $\mathbb{E}[d_{i}^4]=\nu_{i}^4+12\nu_{i}^2\sigma^2+12\sigma^4$. Also, $\mathbb{E}[(d_{i}^2-2\sigma^2)(d_{j}-2\sigma^2)]=\mathbb{E}[d_{i}^2-2\sigma^2]\mathbb{E}[d_{j}-2\sigma^2]=\nu_{i}^2\nu_{j}^2$ holds for all $i\neq j$ since $d_{i}$ and $d_{j}$ are independent for all $i\neq j$. Hence, we conclude
\begin{align*}
\mathbb{E}\left[(\hat{\sigma}^2-\sigma^2)^2\right] &= \frac{1}{4m^2}\left( \sum_{i=1}^{m}(\nu_{i}^4+8\nu_{i}^2\sigma^2+8\sigma^4)+2\sum_{i<j}\mu_{i}^2\mu_{j}^2 \right)\\
&= \frac{1}{4m^2}\left( \left(\sum_{i=1}^{m}\nu_{i}^2\right)^2+8\sigma^2\sum_{i=1}^{m}\nu_{i}^2+8m\sigma^4\right)\\
&\leq \frac{1}{4m^2}\left(C^2n+8C\sigma^2\sqrt{n}+8m\sigma^2\right),
\end{align*} 
Finally, we have that $\mathbb{E}\left[(\hat{\sigma}^2-\sigma^2)^2\right]=O(1/n)$.

\end{proof}

%%%%%%%%%%%%%%%%%%%%%%%%%%%%%%%%%%%%%%%%%%%%%%%%%%%%%%%%%%%%%%%%%%%%%%%%%%%%%%%%%%%
%%%%%%%%%%%%%%%%%%%%%%%%%%%%%%%%%%%%%%%%%%%%%%%%%%%%%%%%%%%%%%%%%%%%%%%%%%%%%%%%%%%
%%%%%%%%%%%%%%%%%%%%%%%%%%%%%%%%%%%%%%%%%%%%%%%%%%%%%%%%%%%%%%%%%%%%%%%%%%%%%%%%%%%
%%%%%%%%%%%%%%%%%%%%%%%%%%%%%%%%%%%%%%%%%%%%%%%%%%%%%%%%%%%%%%%%%%%%%%%%%%%%%%%%%%%

\begin{proof}[Proof of Lemma \ref{lemma:time}]

First, we use the Cauchy-Schwartz inequality to bound the expected square of the difference of $\hat{r}$ and $\hat{r}'$ conditional on the event that the denominator $D(Y)$ is bounded away from zero. We get
\begin{align*}
\mathbb{E}\left[|\hat{r}-\hat{r}'|^2 \;\Big|\; |D(Y)|>\alpha \mathbb{E}[D(Y)]\right] &= \mathbb{E}\left[\frac{\langle X-X',Y\rangle^2}{(\|Y\|^2-n\sigma_{2})^2}\;\Big|\; |D(Y)|>\alpha \mathbb{E}[D(Y)] \right]\\
&\leq \mathbb{E}\left[ \frac{\|X-X'\|^2\|Y\|^2}{\alpha^2 \mathbb{E}[D(Y)]^2} \;\Big|\; |D(Y)|>\alpha \mathbb{E}[D(Y)]\right].
\end{align*}

Since $\|X-X'\|^2=r^2\sum_{i=1}^{n}(f(t_{i})-f(p_{i}))^2$ (see Sections \ref{sec:time}) is deterministic and $\mathbb{E}=\|\mu_{2}\|^2$, we can write further write the above inequality as 
\begin{align*}
\mathbb{E}\left[|\hat{r}-\hat{r}'|^2 \;\Big|\; |D(Y)|>\alpha \mathbb{E}[D(Y)]\right] &\leq \frac{\|X-X'\|^2}{\alpha^2\|\mu_{2}\|^4} \mathbb{E}\left[\|Y\|^2 \;\Big|\; |D(Y)|^2>\alpha \mathbb{E}[D(Y)]\right]\\
&= \frac{\|X-X'\|^2}{\alpha^2\|\mu_{2}\|^4} (\|\mu_{2}\|^2+n\sigma_{2}^2),
\end{align*}
Since $\|X-X'\|^2=O(1)$, we get $\mathbb{E}\left[|\hat{r}-\hat{r}'|^2 \;|\; |D(Y)|>\alpha \mathbb{E}[D(Y)]\right]=O(1/\|\mu_{2}\|^2)=O(1/n)$. Using the result of Proposition \ref{prop:msl} that
\begin{equation*}
\mathbb{E}\left[|r-\hat{r}|^2 \;\Big|\; |D(Y)|>\alpha \mathbb{E}[D(Y)]\right]=O(1/n),
\end{equation*}
and the triangle inequality, $|r-\hat{r}'|^2 \leq |r-\hat{r}|^2+|\hat{r}-\hat{r}'|^2$, we get 
\begin{equation*}
\mathbb{E}\left[|r-\hat{r}'|^2|\; |D(Y)|>\alpha \mathbb{E}[D(Y)]\right]=O(1/n).
\end{equation*}

\end{proof}

%%%%%%%%%%%%%%%%%%%%%%%%%%%%%%%%%%%%%%%%%%%%%%%%%%%%%%%%%%%%%%%%%%%%%%%%%%%%%%%%%%%%%%%%%%%%
%%%%%%%%%%%%%%%%%%%%%%%%%%%%%%%%%%%%%%%%%%%%%%%%%%%%%%%%%%%%%%%%%%%%%%%%%%%%%%%%%%%%%%%%%%%%
%%%%%%%%%%%%%%%%%%%%%%%%%%%%%%%%%%%%%%%%%%%%%%%%%%%%%%%%%%%%%%%%%%%%%%%%%%%%%%%%%%%%%%%%%%%%
%%%%%%%%%%%%%%%%%%%%%%%%%%%%%%%%%%%%%%%%%%%%%%%%%%%%%%%%%%%%%%%%%%%%%%%%%%%%%%%%%%%%%%%%%%%%
%%%%%%%%%%%%%%%%%%%%%%%%%%%%%%%%%%%%%%%%%%%%%%%%%%%%%%%%%%%%%%%%%%%%%%%%%%%%%%%%%%%%%%%%%%%%

\bibliographystyle{siam}
\bibliography{report}

\end{document}